\newcommand{\basisname}{\mathcal{B}}
\newcommand{\vecl}{{\vec{l}}}
\newcommand{\vecb}{\vec{b}}
\newcommand{\Us}{U_S}
\newcommand{\Uc}{U_C}
\newcommand{\tUs}{\tilde{U}_S}
\newcommand{\tUc}{\tilde{U}_C}
\newcommand{\Z}{\mathbb{Z}}
\theoremstyle{plain}
\newtheorem{theorem}{Theorem}[section]
\newtheorem{lemma}[theorem]{Lemma}
\newtheorem{claim}[theorem]{Claim}
\newtheorem{corollary}[theorem]{Corollary}
\newtheorem{definition}[theorem]{Definition}
\newtheorem{algorithm}{Algorithm}
\newcommand{\prob}{{\rm Prob}}
\newcommand{\var}{{\rm Var}}
\newcommand{\cov}{{\rm Cov}}
\newcommand{\Exp}{{\rm E}}
\DeclareMathOperator{\Span}{span}
\title{How hard is deciding trivial versus nontrivial in the dihedral
  coset problem? }
\author{Nai-Hui Chia\thanks{Partially supported by National Science
    Foundation award CCF-1218721, and by the National Security Agency
    (NSA) under Army Research Office (ARO) contract number
    W911NF-12-1-0522.}} 
\affil{Department of Computer Science and Engineering, \\
    The Pennsylvania State University, University Park, PA 16802\\ 
\texttt{nxc233@cse.psu.edu}} 
\author{Sean Hallgren\thanks{Partially supported by National
    Science Foundation awards CCF-1218721  and CCF-1618287, and by the
    National Security Agency (NSA) under Army Research Office (ARO)
    contract number W911NF-12-1-0522.}}
\affil{Department of Computer Science and Engineering, \\
The Pennsylvania State University University Park, PA 16802\\ 
\texttt{hallgren@cse.psu.edu}}
\begin{document}

\maketitle

\begin{abstract}
  We study the hardness of the dihedral hidden subgroup problem.  It
  is known that lattice problems reduce to it, and that it reduces to
  random subset sum with density $> 1$ and also to quantum sampling
  subset sum solutions.  We examine a decision version of the problem
  where the question asks whether the hidden subgroup is trivial or
  order two.  The decision problem essentially asks if a given vector
  is in the span of all coset states.  We approach this by first
  computing an explicit basis for the coset space and the
  perpendicular space.  We then look at the consequences of having
  efficient unitaries that use this basis.  We show that if a unitary
  maps the basis to the standard basis in any way, then that unitary
  can be used to solve random subset sum with constant density $>1$.
  We also show that if a unitary can exactly decide membership in the
  coset subspace, then the collision problem for subset sum can be
  solved for density $>1$ but approaching $1$ as the problem size
  increases.  This strengthens the previous hardness result that
  implementing the optimal POVM in a specific way is as hard as
  quantum sampling subset sum solutions.
\end{abstract}

\section{Introduction}\label{sec:intro}

The dihedral coset problem is an important open problem in quantum
algorithms.  It comes from the hidden subgroup problem, which is
defined as: given a function on a group $G$ that is constant and
distinct on cosets a subgroup $H$, find $H$.  Here we will focus on
the case when $G$ is the dihedral group of order $2N$.  It is known
that this problem reduces to the case when the subgroup is order
two~\cite{Ettinger00}.  All known approaches for solving the hidden
subgroup problem over the dihedral group start by evaluating the
function in superposition and measuring the function value.  The
result is a random coset state $\tfrac{1}{\sqrt{2}}(|0,x\rangle +
|1,x+d\rangle)$, where $d\in \Z_N$ is a fixed label of the subgroup
and $x$ is a coset representative uniformly chosen in $\Z_N$.  For our
purposes, it is more convenient to have the following quantum problem
rather than the hidden subgroup problem.

The {\em dihedral coset problem}~\cite{Regev04} is: given a tensor product of $k$
coset states
\[
	|c_{x_1,x_2,\cdots,x_k}^{(d)}\rangle=\frac{1}{\sqrt{2}}(|0,x_1\rangle+|1,x_1+d\rangle)
        \otimes
        \cdots\otimes\frac{1}{\sqrt{2}}(|0,x_k\rangle+|1,x_k+d\rangle),
\]
where $x_1,\dots,x_k$ are randomly chosen in $\Z_N$, compute $d$.  The
first register of each state is mod 2, and the second register is mod
$N$.

This is a natural problem to consider after the successes with
abelian groups such as $\Z_N$.  The dihedral group with $2N$ elements
has $\Z_N$ as a normal subgroup.  The representations are mostly
two dimensional, so it does not have obvious problems like the
symmetric group, where we know large entangled measurements are
required to get information from the states~\cite{HMRRS10}.  Furthermore,
Regev~\cite{Regev04} showed that the unique shortest vector problem reduces
to the dihedral coset problem, so it could provide a pathway for finding a quantum
algorithm for lattice problems.

Much is known about the dihedral coset problem, at least compared to
most other nonabelian groups (although there are groups with efficient
algorithms, e.g.\ \cite{FIMSS14,Ivanyos2008,Decker2014}).  Ettinger and
Hoyer~\cite{Ettinger00} showed that a polynomial number of
measurements in the Fourier basis has enough classical information to
determine $d$, but the best known algorithm takes exponential time to
compute it.  Kuperberg found subexponential time
algorithms~\cite{Kuperberg05,Kup13} for the problem.  He also showed
that computing one bit of $d$ was sufficient to compute all of $d$.
This algorithm was a big step, although it should be noted that it
seems difficult to combine this with Regev's uSVP to dihedral group
HSP reduction to get a subexponential time algorithm for the uSVP, partly
due to the fact that the coset states created in the reduction have
errors with some probability.

The dihedral coset problem also has some connections to the subset sum
problem.  Bacon, Childs, and van Dam analyzed how well a ``pretty good
measurement'' performs~\cite{Bacon06}.  This type of measurement
maximizes the probability of computing $d$ correctly.  It is unknown
how to compute the measurement they find without quantum sampling
subset sum solutions.  A unitary implementing this can be used to
solve the worst case subset sum, which is NP-complete.  Regev showed how
to reduce the dihedral coset problem to the random subset sum problem
density $\rho>1$ where $\rho$ also approaches 1 as the problem size
increases.  Density $1$ is the hardest case for the random subset sum
problem as shown in Proposition 1.2 in~\cite{Impagliazzo96}.  But is
solving the dihedral coset problem as hard as subset sum, and if so,
for what parameters?  The only connection we are aware of is to
compose two known reductions.  First, random subset sum with density
$\rho=1/\log k$ reduces to uSVP.  Then uSVP reduces to the dihedral
coset problem.  It is open if an efficient quantum algorithm exists
for random subset sum, and density $1/\log k$ may not be as hard to
solve as constant density.

\subsection{New approach}

In this paper we focus on distinguishing trivial from order two
subgroups.  Instead of trying to compute $d$, we define a problem
which asks if the state is an order two coset state, or is the trivial
subgroup case.  We define this problem as the {\em dihedral coset
  space problem (DCSP)}: either an order two coset state is given, or
a random standard basis vector is given, decide which.  The random
standard basis vector corresponds to the trivial subgroup case in the
hidden subgroup problem.  This problem is a special case of the
decision version of the HSP defined by Fenner and Zhou~\cite{FZ08}
since we are restricting to order two subgroups.  In their paper,
they found a search to decision reduction when $N$ is a power of
two.  So it turns out that the problem is not computationally easier
in that case.

We start by finding a set of vectors that span $C$ and $C^\bot$.  Let
$\vecl \in \Z_N^k$, and $p\in \Z_N$.  The vectors
have the form 
\[
|S_{\vecl,p}^m\rangle = \tfrac{1}{\sqrt{|T_{\vecl,p}|}}
\sum\limits_{j=0}^{|T_{\vecl,p}|-1} \omega_{|T_{\vecl,p}|}^{mj}|
\vec{b}^{(j)}_{\vecl,p}\rangle|\chi_{\vecl}\rangle,
\]
where $T_{\vecl,p}$ contains the subset sum solutions for $(\vecl,p)$,
and the vectors $\vec{b}$ are an ordered set of the subset sum
solutions.  We call this set of orthonormal vectors the {\em subset
  sum} basis.  We prove that the $m=0$ subset of vectors span $C$ and
the remaining ones, which have $m\geq 1$, span $C^\bot$.

Ideally we would like to reduce subset sum to the DCSP.  Since this is
still out of reach, we prove a weaker relationship.  Instead, we
assume there is an algorithm that uses the subset sum basis to solve
the DCSP and examine the consequences.  Such an algorithm needs to
decide if $m=0$ or $m\geq 1$ to distinguish if the vector is in $C$ or
$C^\bot$.  In this paper we consider two main types of unitaries that
use this basis.  We show that in one case such a unitary can be used
to solve random subset sum and in the other case it can be used to
solve the random collision problem.  This may indicate that the
unitaries are difficult to implement.

The first type of unitary we consider maps the subset sum basis to the
standard basis.  An example would be one that maps each vector
$|S_{\vecl,p}^m\rangle$ to the corresponding standard basis vector
$|m,p,\vecl\rangle$, identifying the vector.  This unitary can be used
to solve a subset sum instance ($\vecl$, $p$) by taking
$|0,p,\vecl\rangle$, applying $U^{-1}$ to get $|S_{\vecl,p}^0\rangle$
and measuring, since $|S_{\vecl,p}^0\rangle$ is a uniform
superposition of solutions.  The ability to identify the basis vector
in this way is very strong because it can solve an NP-complete
problem, but we show the connection for a wider range of unitaries.
In particular, we show that any unitary that maps the subset sum basis
to the standard basis in some way can be used to solve the random
subset sum problem in the cryptographic range of constant density
$\rho>1$.  This can be view as generalizing the connection to quantum
sampling in~\cite{Bacon06}.

The proof for this case works by showing that such a unitary can be
used to solve worst case collision for the subset sum function.  That
is, given a subset sum instance $(\vecl, p)$ and a solution vector
$\vecb$, the goal is to compute a second solution $\vecb'$ if one
exists.  Then we use the fact that random subset sum reduces to random
collision for density a constant greater than one~\cite{Impagliazzo96}.

The second type of unitary we consider maps the subset sum basis to
vectors where the first bit is zero if the vector is in $C$, and is
one if the vector is from $C^\perp$.  This type of unitary can be used
to solve the DCSP by computing the unitary on the input vector and
measuring the first bit.  It is a relaxation of the first type
of unitary because it could be followed by another unitary mapping to
the standard basis.  We show that this type of unitary can be used to
solve the random collision problem for subset sum with density
$\rho = 1 + c\log \log N/\log N$.  This collision problem for this
density appears to be less well understood than for constant density.

The proof for this case uses the unitary that can solve the DCSP to
solve the random collision problem for subset sum.  The problem in
this case has an arbitrary solution vector $\vecb$ fixed, and then a
vector $\vecl$ is chosen at random.  The goal is again to find a
second solution $\vecb' \neq \vecb$ such that
$\vecb'\cdot \vecl = \vecb \cdot \vecl \bmod N$ on input $\vecb$ and $\vecl$.

In addition to these two main types of unitaries we show that a small
generalization of the form of the subset sum bases has similar results.  

The hardness of random subset sum depends on the density and the same
is true for the collision problem.  But for collision the definition
is important also.  There are four definitions of finding collisions
of hash functions~\cite{Rogaway2004}.  Our definition of random subset
sum collision is based on the universal one-way hash-function family.
That is, for any point in the domain, given the hash function
uniformly at random from the family, the goal is to find another point
in the domain having the same hash value.  Impagliazzo and Naor have
shown that random subset sum collision is at least as hard as the
random subset sum problem when the density is a constant greater than
$1$~\cite{Impagliazzo96}. However, the density of the random subset
sum collision problem we consider has density
$\rho \leq 1+c\log N\log N/\log N$.  This density is between the density
used for subset sum in~\cite{Regev04} and the cryptographic one.  The
hardness of densities for collision in this range is not known, but it
can be contrasted with random subset sum, where the problem gets
harder as the density approaches one~\cite{Impagliazzo96}.

There are several open questions. Can the second type of unitary above
also be used to solve
random subset sum? Consider unitaries which decide membership of $C$
with small error, e.g., $1/poly$.  Can these unitaries be implemented
efficiently or solve some hard problems?  Is it possible to implement
a unitary efficiently distinguishing $C$ from $C^\bot$, with the
subset sum basis, or some other basis?  If a space has a basis that
seems hard to be implemented for some reason, does that mean that no
basis for that space is efficient?  Is it possible that a larger space
$C'$ containing $C$ exists where it is easier to test $C'$ vs.\
$C'^\bot$?  Deciding membership in a subspace or its complement is a
generalization of classical languages to quantum languages.  Are there
other examples?

\section{Background}

In this section, we give the background of the dihedral coset problem and the random subset sum problem. 

The dihedral coset problem comes from the dihedral hidden subgroup problem which is 
\begin{definition}[Dihedral Hidden Subgroup Problem]
 Given the dihedral group $D_{2N}$ and a function $f$ that maps $D_{2N}$ to some finite set such that $f$ hides a subgroup $H$ ($f$ takes same value within each coset of $H$ and takes distinct value on different cosets), the problem is to find a set of generators for $H$.
\end{definition}
Ettinger and Hoyer showed that the problem reduces to the case
when the subgroup is order two~\cite{Ettinger00}. Hence, we can assume
$H$ is an order two subgroup, which can be represented as $\{1,d\}$
for $d\in \mathbb{Z}_N$.  All known approaches for solving this
problem start by evaluating the function in superposition to get
$\sum_{g\in D_{2N}}|g,f(g)\rangle$, and then measuring the function
value.  This results in an order two coset state
$\frac{|0,x\rangle+|1,x+d\rangle}{\sqrt{2}}$, where $x\in \Z_N$ is a
random coset representative.  Then the problem becomes to find $d$ when
given many random order two coset states. This problem is defined as
follows:
\begin{definition}[Dihedral Coset Problem (DCP)] 
	Given a random $k$-register order two coset state 
	\[
		|c_{x_1,x_2,\dots,x_k}^{(d)}\rangle=\frac{1}{\sqrt{2}}(|0,x_1\rangle+|1,x_1+d\rangle)\otimes\cdots\otimes\frac{1}{\sqrt{2}}(|0,x_k\rangle+|1,x_k+d\rangle).
	\]
	The problem is to find $d$.
\end{definition}

The hardness of the DCP has been studied by reducing to the random
subset sum problem~\cite{Regev04} which is defined as follows:  

\begin{definition}[Random Subset Sum Problem] 
  Given a vector of positive integers $\vec{l}=[l_1,l_2,\dots,l_k]^T$
  uniformly distributed in $\mathbb{Z}_N^k$ and
  $s=\vec{b} \cdot \vec{l} \pmod N $ where $\vec{b}\in \mathbb{Z}_2^k$
  is chosen uniformly at random, find a vector
  $\vec{b}'\in \mathbb{Z}_2^k$ such that
  $\vec{l}\cdot\vec{b}'=s \pmod N$. The density is defined
  as $\rho=\frac{k}{\log(N)}$. 
\end{definition}

Although the worst-case subset sum problem is NP-hard, the random
subset sum problem can be solved in polynomial time when the density
is in a certain range. There is no known polynomial-time algorithm for
solving the case when $\rho$ is $\Omega(1/k)$ and
$O(\frac{k}{\log^2(k)})$. Regev~\cite{Regev04} showed that a solution
to the random subset sum problem with $\rho>1$ implies an efficient
quantum algorithm for solving the DCP.  Moreover, we note that one can
reduce the random subset sum problem with $\rho=O(1/\log k)$ to a
lattice problem~\cite{Lag85,Coster92}, and then to the
DCP~\cite{Regev04}. Since these two ranges are generally believed not
equivalent, it is still not clear if the DCP is equivalent to 
random subset sum with $\rho$ in a hard range.

In the rest of this section, we define one more problem which will be used in the section~\ref{sec:hardnessresults}. 
\begin{definition}[Random Subset Sum Collision Problem]\label{def:randcollision}
  Let $\vecb \in \mathbb{Z}_2^k$ be an arbitrary fixed vector.  Given $\vecb$, and a
  vector $\vecl\in\mathbb{Z}_N^k$ chosen uniformly at random, the problem
  is to find a solution
  $\vec{b}'\in \mathbb{Z}_2^k$ such that
  $\vec{b}\cdot \vecl\equiv \vec{b}'\cdot \vecl \pmod N$ and
  $\vec{b}'\neq \vec{b}$.
\end{definition}

The worst-case version of this problem is to find $\vec{b}'$ for
arbitrary $\vec{b}$ and $\vec{l}$ which are given. For simplicity, we
will call this problem the random collision problem and the worst-case
version as the worst-case collision problem in the rest of the paper.

Impagliazzo and Naor showed a relationship between random collision
problem and the random subset sum problem.  The input in their
notation has $n$ numbers modulo $2^{\ell(n)}$ plus the target value.

\begin{theorem}[Theorem 3.1 in~\cite{Impagliazzo96}]
  \label{thm:rssp_rsscp_eqv}
  Let $\ell(n)=(1-c)n$ for $c>0$. If the subset sum function for
  length $\ell(n)$ is one-way, then it is also a family of universal
  one-way hash functions.  
\end{theorem}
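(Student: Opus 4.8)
The plan is to prove the reduction in its natural (contrapositive) form: writing the subset sum function as $f_a(b)=\sum_{i=1}^n b_i a_i \bmod 2^{\ell(n)}$, indexed by the weights $a\in\mathbb{Z}_{2^\ell}^n$ and evaluated on $b\in\{0,1\}^n$, I assume $f$ is one-way and deduce that the family $\{f_a\}_a$ is universal one-way (target-collision resistant). Concretely, I assume an adversary $C=(C_1,C_2)$ winning the UOWHF game with non-negligible probability: $C_1$ commits to a target $x\in\{0,1\}^n$ \emph{before} seeing the key, then on a uniformly random key $a$ the algorithm $C_2$ returns some $x'\neq x$ with $f_a(x')=f_a(x)$. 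From $C$ I build an inverter $I$ for $f$ at the same length $\ell(n)$, contradicting one-wayness.

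The structural fact I would lean on is that $\{f_a\}$ is a $2$-universal hash family: for $b\neq b'$ the vector $b-b'$ has a coordinate equal to $\pm 1$, a unit mod $2^\ell$, so $\prob_a[f_a(b)=f_a(b')]=2^{-\ell}$. Since the density exceeds $1$, that is $\ell(n)=(1-c)n<n$, the leftover hash lemma gives that $(a,f_a(b))$ for uniform $b$ is within statistical distance $\tfrac12 2^{-cn/2}$ of $(a,U)$, with $U$ uniform on $\mathbb{Z}_{2^\ell}$. I would use this to argue that a \emph{surgically modified} key remains exponentially close to uniform, so that running $C_2$ on it changes its success probability only negligibly.

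Given an inversion challenge $(a,s)$, where $s=f_a(b)$ for an unknown uniform $b$, the inverter $I$ runs $C_1$ to obtain the committed target $x$, picks a random coordinate $i$ with $x_i=1$, and modifies the key only there, setting $\tilde a_i=a_i+(s-f_a(x))$ and $\tilde a_j=a_j$ for $j\neq i$, so that $f_{\tilde a}(x)=s$ exactly. Since $\tilde a_i=s-\sum_{j\neq i,\,x_j=1} a_j$ and $s$ is near-uniform even conditioned on the remaining weights, $\tilde a$ is $2^{-\Omega(n)}$-close to uniform; hence $C_2(\tilde a)$ returns, with essentially its original probability, an $x'\neq x$ with $f_{\tilde a}(x')=s$. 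As $\tilde a=a+(s-f_a(x))e_i$, one has $f_a(x')=s-(s-f_a(x))\,x'_i$, so whenever $x'_i=0$ the returned $x'$ is a genuine preimage of the challenge $s$ under the original key $a$, which $I$ outputs.

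I expect the main obstacle to be precisely guaranteeing that the collision $C$ produces is a preimage of the \emph{specific} challenge value $s$, rather than merely a second preimage of $f_a(x)$. Because the modified coordinate $i$ is random and information-theoretically hidden from $C$ (the key it sees is near-uniform and almost independent of $i$), the success probability $\prob[x'_i=0]$ equals the expected fraction $|\mathrm{supp}(x)\setminus\mathrm{supp}(x')|/|\mathrm{supp}(x)|$, which is non-negligible unless $C$ almost always returns a superset $x'\supseteq x$. That residual case is exactly $C$ producing a nonempty zero-sum subset $z=x'-x$ supported off $\mathrm{supp}(x)$, for which $\langle a,z\rangle\equiv 0$ under the original weights as well; I would dispatch it by a second reduction that plants the challenge as a weight, feeding such a zero-sum finder an instance in which $-s$ occupies one randomly placed coordinate, so that any zero-sum subset meeting that coordinate is itself a preimage of $s$, recovered with probability at least $\mathbb{E}[|z|]/n$ by the same hiding argument. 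In either branch $I$ inverts $f$ with non-negligible probability, contradicting one-wayness and establishing that $f$ is universal one-way.
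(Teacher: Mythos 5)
There is nothing in the paper to compare your argument against: the paper states this result purely as an import, quoting Theorem 3.1 of \cite{Impagliazzo96}, and uses it (plus a remark about what its proof establishes) only to extract Corollary~\ref{cor:rssp_rsscp_eqv}; no proof appears in the paper itself. Judged against the original Impagliazzo--Naor reduction, your proposal is correct and is essentially a faithful reconstruction of it: the contrapositive set-up, the $2$-universality of $\{f_a\}$ (valid because $\pm 1$ is a unit modulo $2^{\ell}$), the leftover-hash-lemma consequence of $\ell=(1-c)n<n$ that $(a,f_a(b))$ is $2^{-\Omega(n)}$-close to $(a,U)$, and the surgery that plants the challenge target into a single, information-theoretically hidden coordinate of a near-uniform key are exactly the ingredients of the known proof. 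Most importantly, you correctly isolated the point where a naive planting argument fails: an adversary that always returns $x'\supseteq x$ (for instance, one committing to $x=0$ and producing a nonempty zero-sum set) is never caught by the first reduction, since no coordinate has $x_i=1$ and $x'_i=0$; your second reduction, planting $-s$ at a hidden coordinate and harvesting any zero-sum set that contains it, is precisely what is needed to cover that residual case.

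Only routine details remain, none of them gaps. First, the inverter cannot tell which of the two events ($x'\not\supseteq x$ versus $x'\supseteq x$) carries at least half of the adversary's success probability, so it should run both branches (running $C_2$ on two different keys is legitimate) or pick one at random, losing a constant factor on top of the $1/n$ from guessing the hidden coordinate. Second, in the zero-sum branch the constructed key has only $n-1$ free slots, so one challenge weight must be discarded and the recovered preimage padded with a $0$ in that position; this is harmless because any preimage of $s$ under the original weights suffices. Finally, if you want a tighter write-up, your two branches fold into a single pass: choose $i$ uniform in $[n]$, set $\tilde a_i = a_i + s - f_a(x)$ when $x_i=1$ and $\tilde a_i = f_a(x)-s$ when $x_i=0$; in either case $\tilde a$ is near-uniform, and whenever the returned collision satisfies $x'_i\neq x_i$, the vector $x'$ (with its $i$-th bit cleared, in the second case) is a preimage of $s$ under the original key, which yields the whole theorem in one reduction with success probability about $\varepsilon/n$ minus a $2^{-\Omega(n)}$ term.
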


The subset sum function for length $\ell(n)$ can be represented by $n$
integers each of which is $\ell(n)$-bits long. The input is an $n$-bit
binary string $\vecb$ which indicates a subset of the $n$ integers and
the function outputs an integer $s$ which is the sum of the subsets of
integers indexed by $\vecb$. A family of universal one-way hash
functions is the set of functions $\mathcal{F}=\{f\}$ which satisfies
the property that if for all $x$, when $f$ is chosen randomly from
$\mathcal{F}$, then finding a collision (i.e., $y\neq x$ and
$f(x)=f(y)$) is hard. Note that the random subset sum problem can be
viewed as inverting a random subset sum function and the random
collision problem is as finding a collision for a random subset sum
function.

In the proof of Theorem~\ref{thm:rssp_rsscp_eqv}~\cite{Impagliazzo96},
Impagliazzo and Naor showed that finding a collision for a random
subset sum function is at least as hard as inverting a random subset
sum function. Therefore, we can give the following corollary:  

\begin{corollary}\label{cor:rssp_rsscp_eqv}
  The random subset sum problem with $N$ a power of $2$ and $\rho$ a constant $>1$
  reduces to the  random collision problem with the same $N$ and $\rho$.
\end{corollary}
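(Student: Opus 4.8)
The plan is to read Theorem~\ref{thm:rssp_rsscp_eqv} as a statement about two average-case search problems and then reconcile parameters. First I would identify the cryptographic primitives with the combinatorial problems at hand. Fixing the family of subset sum functions $f_{\vecl}(\vecb) = \vecb\cdot\vecl \bmod N$ indexed by $\vecl\in\Z_N^k$, the assertion that this family is \emph{one-way} is precisely the statement that the random subset sum problem is hard: a random function $\vecl$ together with the image $s = f_{\vecl}(\vecb)$ of a uniform input $\vecb$ is given, and one must produce some preimage. Likewise, the assertion that the family is a \emph{family of universal one-way hash functions} is precisely the statement that the random collision problem is hard: an input $\vecb$ is fixed first, a random function $\vecl$ is then supplied, and one must find a second input colliding with $\vecb$.

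Second, I would translate the parameter $\ell(n)=(1-c)n$ into the density used here. With $k=n$ numbers each $\ell(n)$ bits long we have $N = 2^{\ell(n)}$ a power of two, so the density is
\[
\rho = \frac{k}{\log N} = \frac{n}{\ell(n)} = \frac{1}{1-c}.
\]
Thus a constant $c\in(0,1)$ corresponds to a constant $\rho = 1/(1-c) > 1$, and the inverse correspondence $c = 1 - 1/\rho$ realizes every constant $\rho>1$. Crucially, the reduction operates on the same function family, so $N$ and $\rho$ are left unchanged by it, which is what the phrase ``the same $N$ and $\rho$'' in the conclusion requires.

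Third, I would invoke the contrapositive of Theorem~\ref{thm:rssp_rsscp_eqv}: if the family fails to be universal one-way (the random collision problem is easy), then it fails to be one-way (the random subset sum problem is easy). This already points in the required direction, but to turn it into a genuine algorithmic reduction rather than a mere implication between hardness assumptions I would extract the explicit construction inside the proof of Theorem~\ref{thm:rssp_rsscp_eqv} in~\cite{Impagliazzo96}, which converts a collision-finder into an inverter for the same function family. Composing this construction with the parameter translation above yields a reduction from random subset sum to random collision at matching $N$ and $\rho$.

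The main obstacle I anticipate is not conceptual but a matter of distributional bookkeeping: making sure the \emph{average-case} success probabilities and the distributions over $\vecl$ and $\vecb$ line up exactly between the two formulations, with no density slackness introduced. In particular one must check that the arbitrary fixed $\vecb$ of Definition~\ref{def:randcollision} plays the role of the worst-case input quantified in the universal one-way property, and that the target $s = \vecb\cdot\vecl$ in the random subset sum problem is genuinely distributed as the image of a uniform input under a uniform function. Once these identifications are pinned down, the corollary follows directly from the reduction embedded in the proof of Theorem~\ref{thm:rssp_rsscp_eqv}.
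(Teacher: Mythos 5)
Your proposal is correct and follows essentially the same route as the paper: the paper's proof likewise observes that the argument inside Impagliazzo--Naor's Theorem~3.1 is itself a reduction converting a collision-finder into an inverter for the subset sum family, and then identifies one-wayness with random subset sum, the universal one-way hash property with the random collision problem, and $\ell(n)=(1-c)n$ with constant density $\rho=1/(1-c)>1$ at $N=2^{\ell(n)}$. Your treatment is somewhat more explicit about the parameter translation and the distinction between a hardness implication and an algorithmic reduction, but the substance is the same.
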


This corollary will be used in the section~\ref{sec:hardnessresults}.

\section{The Dihedral Coset Space Problem}

In this section we set up our approach.  We first define the dihedral
coset space problem and show how to use it to solve the dihedral coset
problem.  Then we define the coset space which we wish to understand.

\begin{definition}[Dihedral Coset Space Problem (DCSP)]
  Given a state $|\tau\rangle$ which is promised to be a random
  order-two coset state $|c^{(d)}_{x_1,x_2,\dots,x_k}\rangle$ or a random standard basis state
  $|\vecb,x\rangle$ where $\vecb\in \mathbb{Z}_2^k $ and
  $x\in \mathbb{Z}_N^k$, the problem is to decide if $|\tau\rangle$ is
  a $k$-register order two coset state or not.
\end{definition}

A solution to the DCSP implies a polynomial-time algorithm for solving
the DCP with $N$ a power of $2$ as shown in~\cite{FZ08}.   We include
a proof of our special case here.

\begin{claim}
  The dihedral coset problem (DCP) with $N$ a power of $2$ reduces to the dihedral coset space problem (DCSP).
\end{claim}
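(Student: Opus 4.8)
The plan is to use a single call to the DCSP oracle to learn one bit of $d$, and then to exploit the power-of-two structure of $\Z_N$ to peel off the bits of $d$ one at a time by recursively reducing a DCP instance over $\Z_N$ to one over $\Z_{N/2}$, which is again a power of two. The whole reduction then consists of $\log N$ rounds, each making (amplified) DCSP queries, after which the collected bits reassemble $d$.

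The key observation is that measuring the parity (the value mod $2$) of the second register of a coset state $\tfrac{1}{\sqrt{2}}(|0,x\rangle+|1,x+d\rangle)$ separates exactly the two DCSP cases. If $d$ is even then $x\equiv x+d \pmod 2$, both branches survive the measurement, and the state remains an order-two coset state; if $d$ is odd then $x\not\equiv x+d\pmod 2$ and the measurement collapses the superposition to a single standard basis state $|b,y\rangle$. Applying this parity measurement to each of the $k$ registers therefore produces a $k$-register order-two coset state when $d$ is even and a $k$-register standard basis state when $d$ is odd. One query to the DCSP oracle on the resulting state thus reveals $d \bmod 2$.

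To recover the remaining bits I would reduce to a DCP instance over $\Z_{N/2}$. In the even case the surviving coset state has both second-register values of the known parity $p=x\bmod 2$, so applying the map $y\mapsto (y-p)/2$ on that parity class (a bijection onto $\Z_{N/2}$, hence extendable to a unitary) yields $\tfrac{1}{\sqrt{2}}(|0,x'\rangle+|1,x'+d/2\rangle)$ over $\Z_{N/2}$ with secret $d/2$. In the odd case I would first apply the controlled shift $|1,y\rangle\mapsto|1,y-1\rangle$ (leaving the $|0\rangle$ branch fixed) to a coset state, turning the secret into the even number $d-1$, and then halve as before to obtain an instance over $\Z_{N/2}$ with secret $(d-1)/2$. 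Recording the learned bit (and the applied shift) at each level and recursing $\log N$ times lets me reconstruct $d=\sum_i d_i 2^i$, and since every $N/2^i$ is again a power of two, each intermediate instance stays within the class the oracle handles.

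The hard part will be making the states fed to the oracle meet the DCSP promise and controlling the error. After the parity measurement the surviving coset state in the even case is uniform only over one parity class rather than over all of $\Z_N$; this is resolved by noting that the halving map $y\mapsto (y-p)/2$ carries the uniform distribution on a parity class to the uniform distribution on $\Z_{N/2}$, so the recursive instances really are random DCP instances of the promised form. The odd case is destructive, so to continue the recursion I would regenerate fresh coset states with the same secret (as in the hidden subgroup problem, where coset states come from re-evaluating the hiding function) before applying the shift. Finally, since the DCSP oracle may succeed only with bounded probability, each of the $\log N$ bit determinations should be amplified by repeating the query on independent coset states and taking a majority vote, with a union bound over the $\log N$ rounds keeping the total error small. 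The delicate point to verify is that each intermediate state, including the collapsed standard basis state in the odd case, is distributed as the oracle expects; it is precisely the power-of-two assumption that makes the halving clean and keeps every intermediate instance inside the promise.
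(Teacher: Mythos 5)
Your proposal is correct, and its first step is exactly the paper's: measuring the least significant bit of each register leaves a coset state when $d$ is even and collapses it to a uniformly random standard basis state when $d$ is odd, so one DCSP query yields $d \bmod 2$. Where you diverge is in the iteration. The paper never leaves $\Z_N$: having learned the low $i$ bits $d_{\mathrm{low}}$ of $d$, it subtracts $d_{\mathrm{low}}$ from the $|1\rangle$ branch of fresh coset states (so the hidden shift $d-d_{\mathrm{low}}$ has $i$ trailing zeros, additions generate no carries into position $i$), measures the $(i+1)$-st least significant bit, and queries the \emph{same} DCSP oracle with the same $N$ and $k$. You instead halve: you map the surviving parity class bijectively onto $\Z_{N/2}$, producing a genuine random DCP instance with secret $\lfloor d/2\rfloor$ over the smaller group, and recurse. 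Your version has the advantage that every intermediate state is exactly a random DCP instance of the promised form (the paper's intermediate coset states have representatives confined to a known residue class, which is harmless for the coset side of the promise but deserves the remark you make), and you are more careful than the paper about regenerating fresh states after destructive measurements and about amplifying a bounded-error oracle. The price is that your reduction queries DCSP oracles over every modulus $N/2^i$, i.e., it reduces to the DCSP as a family of problems over all power-of-two moduli rather than to the single parameter choice the claim fixes; to stay at modulus $N$ you would need an extra embedding step (e.g., doubling the second register back into $\Z_N$ and adding a uniformly random shift so that the soundness-side distribution is again uniform), or you could simply adopt the paper's subtract-and-measure-the-next-bit bookkeeping, which avoids the issue entirely.
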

\begin{proof}
  Suppose we are given the input of the DCP with subgroup $d$, we
  first show how to get the least significant bit of $d$. 
  
  Since $N$ is a power of $2$, the least significant bit of $x$ and $x+d \pmod N$ are equal for $x\in \mathbb{Z}_N$ if and only if $d$ is even.
  Therefore by measuring the least significant bit of the state $\frac{|0,x\rangle+|1,x+d\rangle}{\sqrt{2}}$, we get the same state if $d$ is even and 
  get either $|0,x\rangle$ or $|1,x+d\rangle$ (which are standard-basis states) otherwise. 
  
  According to the observation above, the least significant bit of $d$ can be computed by the following algorithm. 
  First, we measure the least significant bit of each register. Then all the registers do not 
  change or collapse to a standard-basis state. Finally, apply the algorithm for the DCSP. 
  If the result is an order-two coset state, the least significant bit is $0$; otherwise, the least significant bit is $1$.

 To get bit $(i+1)$, one subtracts $d$ by the least significant
 $i$ bits computed and measure the $I+1$-th least significant bit of the state. 
 Repeat the process above until all bits of $d$ are known.
\end{proof}
It is worth noting that this fact also implies that the lattice problem can be reduced to the DCSP due to
the known reduction from the lattice problem to the DCP with $N$ a power of $2$~\cite{Regev04}.

The main objects we want to understand are the coset space and its complement.
\begin{definition}
  The coset space $C=\Span(\{|c_{x_1,\dots,x_k}^{(d)}\rangle:\:
  d,x_1,\dots,x_k\in \mathbb{Z}_N\})$ and the orthogonal complement of
  $C$ is $C^{\bot}$.
\end{definition}
Note that a test for a vector being in $C$ or $C^\bot$ is sufficient
to solve the DCSP if $k$ is big enough.  This follows from counting the number of
$k$-register order two coset states. There are at most $N$ subgroups,
and at most $N^k$ coset representatives, so the number of $k$-register
order two coset states is at most $N(N)^k$.  The dimension of the
whole space is $(2N)^k$. Hence, the subspace spanned by $k$-register
order-two coset states is at most $1/2$ of the whole space when
$k\geq\log 2N$.   

\begin{claim}\label{claim:solvedcsp}
  Let $k=\log 2N+k'$.  Let $\Pi_{C}$ be a projector onto $C$ and
  $\Pi_{C^{\bot}}$ be a projector onto $C^{\bot}$. If the input is an
  order two coset state, the measurement $\{\Pi_C, \Pi_{C^{\bot}}\}$
  outputs $C$ always. Otherwise, if the input is a random standard
  basis state, then this measurement outputs $C^{\bot}$ with
  probability at least $1-1/2^{k'+1}$.
\end{claim}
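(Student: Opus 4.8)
The plan is to treat the two cases separately: a one-line deterministic argument for the coset-state input, and a trace (averaging) argument for the random standard-basis input, after which everything reduces to the dimension count already sketched before the claim.

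For the first case the argument is immediate from the definition of $C$. Since $C=\Span(\{|c_{x_1,\dots,x_k}^{(d)}\rangle\})$, every order two coset state lies in $C$ by construction, so $\Pi_C|c^{(d)}_{x_1,\dots,x_k}\rangle = |c^{(d)}_{x_1,\dots,x_k}\rangle$ and $\Pi_{C^\bot}|c^{(d)}_{x_1,\dots,x_k}\rangle = 0$. Hence the measurement $\{\Pi_C,\Pi_{C^\bot}\}$ returns the outcome $C$ with probability $1$, i.e.\ always.

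For the second case I would compute the probability of the outcome $C^\bot$ averaged over both the uniformly random choice of standard basis input and the (generally probabilistic) measurement outcome. Writing a chosen input as $|\psi\rangle=|\vecb,x\rangle$, the measurement returns $C^\bot$ with probability $\langle\psi|\Pi_{C^\bot}|\psi\rangle$. The key step is to average this over all $(2N)^k$ standard basis states and use the fact that summing the diagonal expectations of a projector over an orthonormal basis recovers its trace, $\sum_{|\psi\rangle}\langle\psi|\Pi_C|\psi\rangle=\operatorname{Tr}(\Pi_C)=\dim C$. This gives
\[
  \Pr[\text{output }C^\bot]=\frac{1}{(2N)^k}\sum_{|\psi\rangle}\langle\psi|\Pi_{C^\bot}|\psi\rangle = 1-\frac{\dim C}{(2N)^k}.
\]

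It then remains only to bound $\dim C$. As observed before the claim, $C$ is spanned by the order two coset states, of which there are at most $N$ choices of $d$ times $N^k$ choices of coset representatives, so $\dim C\le N^{k+1}$ regardless of any linear dependencies among the spanning vectors (dependencies only decrease the dimension, which helps). Substituting this bound and then $k=\log 2N+k'$, with the logarithm base $2$ so that $2^{k}=2^{\log 2N}\cdot 2^{k'}=2N\cdot 2^{k'}$, the ratio simplifies as $\dim C/(2N)^k \le N^{k+1}/(2^k N^k)=N/2^{k}=1/2^{k'+1}$, yielding the claimed bound $\Pr[\text{output }C^\bot]\ge 1-1/2^{k'+1}$. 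I do not expect a genuine obstacle here; the only points needing care are interpreting the stated probability as a combined average over the random input and the measurement, keeping the logarithm in base $2$, and noting that the count of spanning states is an upper bound on $\dim C$ rather than an equality.
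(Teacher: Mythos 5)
Your proposal is correct and takes essentially the same approach as the paper, which states Claim~\ref{claim:solvedcsp} without a separate proof and relies precisely on the dimension count you use (at most $N\cdot N^k$ spanning coset states versus total dimension $(2N)^k$, so $\dim C/(2N)^k \le N/2^k = 1/2^{k'+1}$ when $k=\log 2N + k'$). Your trace identity $\sum_{|\psi\rangle}\langle\psi|\Pi_{C^\bot}|\psi\rangle=\operatorname{Tr}(\Pi_{C^\bot})=(2N)^k-\dim C$ merely makes explicit the averaging over the random standard basis input that the paper leaves implicit, and your handling of the two cases and the base-$2$ arithmetic are all correct.
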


\section{The Subset Sum Basis}\label{sec:subsetsumbasis}

In this section, we start by finding an orthonormal basis for $C$ and
one for $C^{\bot}$. Note that if we can give a unitary which distinguishes 
which of the two subspaces we are in ($C$ or $C^{\bot}$)
efficiently, we can solve the DCSP efficiently as in
Claim~\ref{claim:solvedcsp}.  

In order to make the basis easier to understand, we permute the
subsystems so that the first bit of all registers are on the left, and
the integers mod $N$ are on the right. That is, write the original
basis state $|b_1,x_1,b_2,x_2,\dots,b_k,x_k\rangle$ as
$|b_1,b_2,\cdots,b_k,x_1,x_2,\cdots, x_k\rangle$.  In this notation
the coset state is written
as
\[
|c^{(d)}_{x_1,x_2,\dots,x_k}\rangle = \frac{1}{2^{k/2}}
\sum_{b_1,\dots,b_k=0}^1 |b_1,\dots,b_k,x_1+b_1 d,\dots, x_k+b_k
d\rangle = \frac{1}{2^{k/2}} \sum_{\vecb \in
  \{0,1\}^k}|\vecb,\vec{x}+\vecb d\rangle.
\]

The subset sum basis is defined as follows: 
\begin{definition} [The Subset Sum Basis]\label{def:basis}
  Let $\vecl=(l_1,l_2,\cdots,l_k)^T\in \mathbb{Z}_N^k ,$ and
  $p \in \Z_N$.  Let
  $T_{\vecl,p} =\{\vec{b}: \vec{b}\cdot \vecl = p, \vec{b} \in
  \mathbb{Z}_2^k\}$ contain subset sum solutions for input $\vecl$,
  $p$.
  If $|T_{\vecl,p}|=0$ then define $|S_{\vecl,p}^m\rangle = 0$.  If
  $|T_{\vecl,p}|\geq 1$, then let $m\in \{0,\dots, |T_{\vecl,p}|-1\}$
  and pick an ordering $\{\vec{b}^{(j)}_{\vecl,p}\}$ of the solutions
  in $T_{\vecl,p}$.  Define the vector
\begin{equation}\label{eq:reflection}
  |S_{\vecl,p}^m\rangle = \tfrac{1}{\sqrt{|T_{\vecl,p}|}} \sum\limits_{j=0}^{|T_{\vecl,p}|-1}
  \omega_{|T_{\vecl,p}|}^{mj}| \vec{b}^{(j)}_{\vecl,p}\rangle.
\end{equation}
For $N, k \in \Z$, define two sets
\begin{equation} \label{eq:ssbasisperp} 
  \basisname^\bot =
  \basisname_{k,N}^{\bot}=
  \{|S_{\vecl,p}^{m}\rangle|\chi_{\vecl}\rangle: \vecl \in
  \mathbb{Z}_N^k,\, p\in \mathbb{Z}_N,\, m\in
  \{1,\dots,|T_{\vecl,p}|-1\},\, |T_{\vecl,p}|\geq 2\}
\end{equation}
and 
\begin{equation} \label{eq:ssbasis0}
	\basisname^0=\basisname_{k,N}^0=\{ |S_{l,p}^{m}\rangle|\chi_{\vecl}\rangle:
	\vecl\in \mathbb{Z}_N^k,\, p\in \mathbb{Z}_N,\,
        m=0,\, |T_{\vecl,p}|\geq 1\}.
\end{equation}
The set
$\basisname=\mathcal{B}^0 \bigcup
\mathcal{B}^{\bot}$
is called the {\em subset sum basis} of $\mathbb{C}^{(2N)^k}$.
\end{definition}

In this definition, $|\chi_j\rangle$ is the Fourier basis state
$|\chi_j\rangle = \tfrac{1}{\sqrt{N}}\sum_i \omega_N^{ij} |i\rangle$,
and
$|\chi_{\vecl}\rangle = |\chi_{l_1}\rangle\cdots |\chi_{l_k}\rangle$.
Note that $\mathcal{B}^0 \cup \mathcal{B}^\perp$ is an orthonormal
basis for the whole space and the two sets are disjoint. The vector
$|S_{\vecl,p}^{m}\rangle$ is a superposition of solution vectors
$\vecb$ to the equation $\vecl \cdot \vecb = p$. If no such $\vecb$
exists then there is no corresponding $|S_{\vecl,p}^m\rangle$. If at
least one solution $\vecb$ exists then $|S^0_{\vecl,p}\rangle$ is in
$\basisname^0$. If at least two solutions $\vecb$ exist then vectors
appear in $\basisname^\perp$. Varying $m$ gives orthogonal
superpositions of the solutions. Ranging over all $\vecl \in \Z_N^k$ and
$p\in \Z$ covers all possible bit vectors. Furthermore, these vectors
are tensored with every possible Fourier basis state over $\Z_N$.

Next we show that $\basisname^{\bot}$ forms an orthonormal basis for $C^{\bot}$.

\begin{claim}\label{claim:ssb_orthogonal}
  The vectors in the set $\basisname^{\bot}$ form an orthonormal
  basis of a space that is orthogonal to the $k$-register order two
  coset space. 
\end{claim}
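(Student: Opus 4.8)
The plan is to verify two independent facts: first that the vectors in $\basisname^{\bot}$ are pairwise orthonormal, and second that each of them is orthogonal to every coset state $|c^{(d)}_{\vec{x}}\rangle$ and hence to all of $C=\Span\{|c^{(d)}_{\vec{x}}\rangle\}$. The structural feature that makes both computations clean is that every element of $\basisname^{\bot}$ is a product of a bit-register part $|S_{\vecl,p}^{m}\rangle$ and a single Fourier mode $|\chi_\vecl\rangle$, so all inner products factor across the two registers and the Fourier register does most of the bookkeeping. Throughout I write $T=|T_{\vecl,p}|$.

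For orthonormality I would compute the inner product of $|S_{\vecl,p}^{m}\rangle|\chi_\vecl\rangle$ with $|S_{\vecl',p'}^{m'}\rangle|\chi_{\vecl'}\rangle$, which factors as $\langle S_{\vecl,p}^{m}|S_{\vecl',p'}^{m'}\rangle\,\langle\chi_\vecl|\chi_{\vecl'}\rangle$. The Fourier factor equals $\delta_{\vecl,\vecl'}$ since distinct Fourier modes are orthonormal, so only the case $\vecl=\vecl'$ requires attention. For fixed $\vecl$ and $p\neq p'$ the vectors are supported on the disjoint solution sets $T_{\vecl,p}$ and $T_{\vecl,p'}$, hence orthogonal; and for the same $(\vecl,p)$ the factor is $\langle S_{\vecl,p}^{m}|S_{\vecl,p}^{m'}\rangle=\tfrac{1}{T}\sum_{j=0}^{T-1}\omega_T^{(m'-m)j}=\delta_{m,m'}$ by the standard orthogonality of DFT vectors. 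This establishes that $\basisname^{\bot}$ is an orthonormal set.

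For orthogonality to $C$, fix a vector $|S_{\vecl,p}^{m}\rangle|\chi_\vecl\rangle$ with $m\geq 1$ and an arbitrary coset state $|c^{(d)}_{\vec{x}}\rangle=\tfrac{1}{2^{k/2}}\sum_{\vecb}|\vecb,\vec{x}+\vecb d\rangle$, and compute their overlap. Expanding $\langle S_{\vecl,p}^{m}|$ and $\langle\chi_\vecl|$, the delta function on the bit register restricts the sum to $\vecb^{(j)}\in T_{\vecl,p}$, while the delta function on the integer register forces the phase $\omega_N^{-\vecl\cdot(\vec{x}+\vecb^{(j)}d)}$. The key observation is that $\vecl\cdot(\vec{x}+\vecb^{(j)}d)=\vecl\cdot\vec{x}+d\,(\vecb^{(j)}\cdot\vecl)=\vecl\cdot\vec{x}+dp$ is the \emph{same} for every solution $\vecb^{(j)}$, since by definition $\vecb^{(j)}\cdot\vecl=p$ on $T_{\vecl,p}$. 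Thus this phase factors out of the sum, leaving a nonzero constant times $\sum_{j=0}^{T-1}\omega_T^{-mj}$, which vanishes for $1\le m\le T-1$. Orthogonality to each coset state then propagates to orthogonality to their span $C$.

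The main obstacle, and really the entire content, is this decoupling step: the Fourier basis $|\chi_\vecl\rangle$ is chosen precisely so that the coset shift $\vec{x}+\vecb d$ contributes a phase depending only on $\vecb\cdot\vecl$, which is constant across the solution set $T_{\vecl,p}$. Once that constancy is recognized, the surviving root-of-unity sum in $m$ annihilates the overlap exactly when $m\neq 0$, which is exactly the range defining $\basisname^{\bot}$. I expect no genuine difficulty beyond correctly tracking the two registers; the companion statement that the $m=0$ vectors instead span $C$ follows from the identical computation, since there $\sum_{j=0}^{T-1}\omega_T^{0}=T\neq 0$.
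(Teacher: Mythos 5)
Your proposal is correct and follows essentially the same route as the paper: the delta on the bit register restricts the overlap to the solution set $T_{\vecl,p}$, the phase $\omega_N^{\vecl\cdot\vec{x}+dp}$ factors out because $\vecb^{(j)}\cdot\vecl=p$ is constant on that set, and the surviving root-of-unity sum $\sum_{j}\omega_{|T|}^{mj}$ vanishes for $m\geq 1$, $|T|\geq 2$. The only difference is that you explicitly verify pairwise orthonormality of $\basisname^{\bot}$ (via DFT orthogonality and disjointness of solution sets), whereas the paper simply cites this as already noted after Definition~\ref{def:basis}.
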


\begin{proof}
  As noted, the vectors form an orthonormal basis of the whole space.
  We will show that an arbitrary state
  in $\basisname^{\bot }$ is orthogonal to all $k$-register order
  two coset states. Fix $\vecl$ and $p$, and let
\[
|\psi\rangle=
|S_{\vecl,p}^m\rangle |\chi_{\vecl}\rangle
= 
\tfrac{1}{\sqrt{|T|}} \sum\limits_{j=0}^{|T|-1} \omega_{|T|}^{mj} |\vec{b}^{(j)}\rangle|\chi_{\vecl}\rangle
\]
be a
  state in $\basisname^{\bot}$  where $T=T_{\vecl,p}$ and
  $\vec{b}^{(j)}=\vec{b}^{(j)}_{\vecl,p}$ to simplify notation. Then for
  an arbitrary order-two coset state $|c_{x_1,x_2,\cdots,x_k}^d\rangle$, the inner product
  $\langle c^{(d)}_{x_1,x_2,\cdots,x_k}|\psi\rangle$ is
\begin{eqnarray} 
  \frac{1}{\sqrt{2^k |T|}}  \sum_{\vecb\in \{0,1\}^k}
  \langle\vecb|\langle \vec{x}+\vec{b}d| \sum\limits_{j=0}^{|T|-1}
  \omega_{|T|}^{mj}|\vec{b}^{(j)}\rangle|\chi_{\vecl}\rangle \nonumber 
  &=& \frac{1}{\sqrt{2^kN^k |T|}}  \sum\limits_{j=0}^{|T|-1}
      \omega_{|T|}^{mj}
      \omega_{N}^{\vecl\cdot(\vec{x}+d\vec{b}^{(j)})} \nonumber\\ 
  &=& \frac{\omega_N^{\vecl\cdot\vec{x}}}{\sqrt{2^kN^k |T|}}	\sum\limits_{j=0}^{|T|-1} \omega_{|T|}^{mj} \omega_{N}^{dp} \label{eq:p}\\
  &=&\frac{\omega_N^{\vecl\cdot\vec{x}+dp}}{\sqrt{2^kN^k |T|}}	\sum\limits_{j=0}^{|T|-1} \omega_{|T|}^{mj} =0 \label{eq:zero}.
\end{eqnarray} 

Eq.~\ref{eq:p} is true because $\vec{b}^{(j)}\cdot \vecl=p$ iff
$\vecb^{(j)}\in T$ by the definition of $T$. Then since $m\geq 1$ and
$|T|\geq 2$
by the definition of $\basisname_{k,N}^{\bot}$, Eq.~\ref{eq:zero} is
true.
\end{proof}

According to Claim~\ref{claim:ssb_orthogonal}, $\Span(\basisname^{\bot})\subseteq C^{\bot}$. Next we show that $\basisname^0$ exactly spans the subspace $C$ (and thus $\Span(\basisname^{\bot})=C^{\bot}$). 

\begin{lemma}\label{lem:cosetspace}
  The set $\basisname^0$ is an orthonormal basis for the
  subspace spanned by the order-two coset states.
\end{lemma}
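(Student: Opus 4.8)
The plan is to show two containments: $\Span(\basisname^0) \subseteq C$ and $C \subseteq \Span(\basisname^0)$. Since Claim~\ref{claim:ssb_orthogonal} already establishes that the vectors of $\basisname^\bot$ are orthonormal and orthogonal to $C$, and since $\basisname^0 \cup \basisname^\bot$ is an orthonormal basis for the whole space, the lemma will follow once I pin down exactly which subspace $\basisname^0$ spans. The cleanest route is a dimension-counting argument combined with one direct inclusion, so I do not have to verify the inner products for $\basisname^0$ by hand.

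First I would show that every coset state $|c^{(d)}_{x_1,\dots,x_k}\rangle$ lies in $\Span(\basisname^0)$, i.e.\ $C \subseteq \Span(\basisname^0)$. The key observation is that every element of $\basisname^\bot$ is orthogonal to $C$ (Claim~\ref{claim:ssb_orthogonal}), so $C \subseteq \Span(\basisname^\bot)^\bot = \Span(\basisname^0)$, using that $\basisname^0 \cup \basisname^\bot$ is an orthonormal basis for the full space and the two sets are disjoint. This containment is therefore essentially free once Claim~\ref{claim:ssb_orthogonal} is in hand.

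The real content is the reverse containment $\Span(\basisname^0) \subseteq C$, equivalently showing that each basis vector $|S^0_{\vecl,p}\rangle|\chi_{\vecl}\rangle$ is itself a superposition of coset states. Here I would expand $|S^0_{\vecl,p}\rangle|\chi_{\vecl}\rangle$ in the original standard basis. Since $m=0$, the vector $|S^0_{\vecl,p}\rangle = \tfrac{1}{\sqrt{|T|}}\sum_j |\vecb^{(j)}\rangle$ is the \emph{uniform} superposition over all solutions $\vecb$ to $\vecl\cdot\vecb = p$; tensoring with $|\chi_{\vecl}\rangle$ and writing out the Fourier state gives a state of the form $\tfrac{1}{\sqrt{|T|N^k}}\sum_{\vecb\in T}\sum_{\vec{y}} \omega_N^{\vecl\cdot\vec{y}} |\vecb,\vec{y}\rangle$. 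The plan is to reorganize this sum so that it becomes a linear combination of coset states $|c^{(d)}_{\vec{x}}\rangle = 2^{-k/2}\sum_{\vecb}|\vecb,\vec{x}+\vecb d\rangle$. The natural approach is to compute the inner products $\langle c^{(d)}_{\vec{x}}|S^0_{\vecl,p}\rangle|\chi_{\vecl}\rangle$ and exhibit the vector explicitly as $\sum_{d,\vec{x}} \alpha_{d,\vec{x}} |c^{(d)}_{\vec{x}}\rangle$ with appropriate coefficients, or more slickly, to argue by dimension count that since $\Span(\basisname^0) = \Span(\basisname^\bot)^\bot \supseteq C$ and the orthogonal decomposition forces $\dim\Span(\basisname^0) = (2N)^k - |\basisname^\bot| = \dim C$, equality holds.

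I expect the dimension count to be the main obstacle, since it requires knowing $\dim C$ independently, which is not obviously equal to $|\basisname^0| = \sum_{\vecl,p}[\,|T_{\vecl,p}|\ge 1\,]$. The safer and more self-contained route, which I would pursue as the primary strategy, is the explicit inclusion: directly verify that each $|S^0_{\vecl,p}\rangle|\chi_{\vecl}\rangle$ can be written as a linear combination of coset states by inverting the relation between coset states and the $\basisname^0$ vectors. Concretely, I would fix $\vecl$ and observe that the coset states $\{|c^{(d)}_{\vec{x}}\rangle\}$, when projected onto the $|\chi_{\vecl}\rangle$ component of the second register, produce exactly states supported on solution sets $T_{\vecl,p}$ with the uniform ($m=0$) phase profile, matching $\basisname^0$. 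The delicate point is confirming that the resulting coefficients are consistent and that the span genuinely covers all of $\basisname^0$ and nothing outside $C$; this amounts to checking that the map sending coset states to their Fourier-component restrictions has the right image, which is where the calculation from Claim~\ref{claim:ssb_orthogonal} (with $m=0$, so the geometric sum does not vanish but equals $|T|$) gets reused to confirm nonvanishing overlap.
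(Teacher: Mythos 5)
Your proposal is correct, but it takes a genuinely different route from the paper's for the hard containment. Both arguments get $C \subseteq \Span(\basisname^0)$ for free from Claim~\ref{claim:ssb_orthogonal}, exactly as you say. For the reverse containment, the paper argues by contradiction: it takes a hypothetical nonzero $|\alpha\rangle \in \Span(\basisname^0) \cap C^\perp$, computes its inner products with all coset states, and observes that orthogonality to every coset state forces $A^{\otimes(k+1)}\vec{v} = \vec{0}$, where $A$ is the $N\times N$ Fourier matrix and $\vec{v}$ carries the coefficients $\sqrt{|T_{\vecl,p}|/(2^kN^k)}\,\alpha_{\vecl,p}$; non-singularity of $A^{\otimes(k+1)}$ then kills $|\alpha\rangle$. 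Your primary strategy instead exhibits each vector of $\basisname^0$ explicitly inside $C$, and it does go through: for $|T_{\vecl,p}|\geq 1$, the change of variables $\vec{y}=\vec{x}+\vecb d$ and the geometric sum over $d$ give
\[
\sum_{\vec{x}\in\Z_N^k}\sum_{d\in\Z_N} \omega_N^{\vecl\cdot\vec{x}+pd}\,|c^{(d)}_{x_1,\dots,x_k}\rangle
= \frac{N^{k/2+1}}{2^{k/2}}\sqrt{|T_{\vecl,p}|}\;|S^0_{\vecl,p}\rangle|\chi_\vecl\rangle,
\]
so each $|S^0_{\vecl,p}\rangle|\chi_\vecl\rangle$ is manifestly a linear combination of coset states: summing over $\vec{x}$ collapses the second register to $|\chi_\vecl\rangle$ and produces the uniform $m=0$ profile weighted by $\omega_N^{-d(\vecl\cdot\vecb)}$, and summing over $d$ then isolates the single value $p$, exactly as you sketch. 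The two proofs are dual uses of the same fact: the paper invokes invertibility of the Fourier transform on $\Z_N^{k+1}$ abstractly to show the relevant kernel is trivial, while you apply the inverse transform constructively. Yours buys explicit change-of-basis coefficients and makes both containments direct, avoiding the contradiction framing; the paper's avoids having to guess coefficients and verify that cross terms cancel. You were also right to discard the pure dimension count as your main route --- it is circular without an independent computation of $\dim C$ --- and the ``delicate point'' you flag is resolved precisely by the displayed identity, which reuses (as you predicted) the $m=0$ case of the computation in Claim~\ref{claim:ssb_orthogonal}, where the geometric sum equals $|T_{\vecl,p}|$ rather than vanishing.
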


\begin{proof}
  Because $C$ is orthogonal to $\Span(\basisname^\perp)$ by
  Claim~\ref{claim:ssb_orthogonal}, $C\subseteq \Span(\basisname^0)$.
  We want to show equality.  Suppose for contradiction that $C\subset
  \Span(\basisname^0)$.  Then there is a vector $|\alpha\rangle \in C^\perp$ that
  is orthogonal to $\Span(\basisname^\perp)$, so $|\alpha\rangle \in C^\perp \cap
  \Span(\basisname^0)$. We show that there is no non-zero linear
  combination of states in $\basisname^0$ whose inner product
  with all order-two coset states is zero. 
  
  Suppose the state
  \[
  |\alpha\rangle = \sum_{\vecl\in \mathbb{Z}_N^k,p\in \mathbb{Z}_N}
  \alpha_{\vecl,p}
  |S_{\vecl,p}^0\rangle|\chi_{l_1},\dots,\chi_{l_k}\rangle
  \]
  is orthogonal to all order-two coset states, i.e.,
  $\langle c^{(d)}_{x_1,x_2,\dots,x_k}|\alpha\rangle=0$ for
  $x_1,\dots,x_k, d\in \mathbb{Z}_N$, for some nonzero vector $|\alpha\rangle \in \Span(B^0)$.  This inner product is

  \begin{eqnarray*}
    &&
       \frac{1}{\sqrt{2^k}}\sum_{\vecb \in \{0,1\}^k} \langle
       \vecb,\vec{x}+\vecb d|  \sum_{\vecl\in \mathbb{Z}_N^k,p\in
       \mathbb{Z}_N} \alpha_{\vecl,p} 
       |S_{\vecl,p}^0\rangle|\chi_{l_1},\dots,\chi_{l_k}\rangle\\
    &=&
        \frac{1}{\sqrt{2^k}}\sum_{\vecb \in \{0,1\}^k} \langle 
        \vecb,\vec{x}+\vecb d|  \sum_{\vecl\in \mathbb{Z}_N^k,p\in 
        \mathbb{Z}_N} \alpha_{\vecl,p} 
         \frac{1}{\sqrt{|T_{\vecl,p}|}} \sum_{j=0}^{|T_{\vecl,p}|-1}
       |\vecb_{\vecl,p}^{(j)}  \rangle|\chi_{l_1},\dots,\chi_{l_k}\rangle  \\
    &=&
        \frac{1}{\sqrt{2^kN^k}}
        \sum_{\vecl\in \mathbb{Z}_N^k,p\in  \mathbb{Z}_N}
        \alpha_{\vecl,p} \frac{1}{\sqrt{|T_{\vecl,p}|}}
        \sum_{\vecb \in \{0,1\}^k} \sum_{j=0}^{|T_{\vecl,p}|-1} \langle \vecb|
        b_{\vecl,p}^{(j)}\rangle  \omega_N^{\vecl\cdot  (\vec{x}+\vecb d)}\\
    &=&
        \frac{1}{\sqrt{2^kN^k}}
        \sum_{\vecl\in \mathbb{Z}_N^k,p\in  \mathbb{Z}_N}
        \alpha_{\vecl,p} \frac{\omega_N^{\vecl\cdot\vec{x}+pd}}{\sqrt{|T_{\vecl,p}|}}
        \sum_{\vecb : \vecb \cdot \vecl = p} \sum_{j=0}^{|T_{\vecl,p}|-1} \langle\vecb|
        b_{\vecl,p}^{(j)}\rangle \\
    &=&
        \frac{1}{\sqrt{2^kN^k}}
        \sum_{\vecl\in \mathbb{Z}_N^k,p\in  \mathbb{Z}_N}
        \alpha_{\vecl,p} \omega_N^{\vecl\cdot\vec{x}+pd}\sqrt{|T_{\vecl,p}|}.
  \end{eqnarray*}

 Then we have the following equations:
\[
  \sum_{\vecl\in \mathbb{Z}_N^k,p\in \mathbb{Z}_N}
  \sqrt{\frac{|T_{\vecl,p}|}{2^kN^k}} \alpha_{\vecl,p} \cdot
  \omega_N^{x_1\cdot l_1+\cdots +x_k\cdot l_k + d\cdot p} = 0,\,
  \forall x_1,\dots,x_k,d \in \Z_N.
\]

Define $\vec{v}$ as an $N^{k+1}\times 1$ vector such that
$\left(\vec{v}\right)_{\vecl,p} =\sqrt{\frac{|T_{\vecl,p}|}{2^kN^k}} \alpha_{\vecl,p}$.
The sums above can be represented as follows: 
\begin{eqnarray}\label{eq:Av0}
	A^{\otimes (k+1)}\cdot \vec{v} = \vec{0}, 
\end{eqnarray}
where $A$ is an $N\times N$ Fourier matrices with the $(i,j)$-th entry as $A_{i,j} = \omega_N^{i j}$ 
and $\vec{0}$ is an $N^{k+1}\times 1$ vector with all entries as $0$. Note that the column of $A^{\otimes (k+1)}$ 
is indexed by $\vecl$ and $p$ and the the row is indexed by $\vec{x}$ and $d$. 

The determinant of $A^{\otimes(k+1)}$ 
is not zero, so the only vector $\vec{v}$ satisfying Equation~\ref{eq:Av0} 
is $\vec{v}=\vec{0}$.  When $|T_{\vecl,p}|\geq 1$ this forces
$\alpha_{\vecl,p} =0$ for every coefficient used in $|\alpha\rangle$.
When $|T_{\vecl,p}|=0$, $\alpha_{\vecl,p}$ is not used in the sum
because $|S_{\vecl,p}^m\rangle =0$. Therefore, these facts contradict the hypothesis that
there exists a nonzero vector $|\alpha\rangle \in \Span(\basisname^0)$ which is orthogonal to
all order-two coset states. 
\end{proof}

Now, it is easy to see that a unitary which can efficiently
distinguish $\Span(\basisname^0)$ from 
$\Span(\basisname^{\bot})$ also distinguishes $C$ from $C^{\bot}$ by
Claim~\ref{claim:solvedcsp} and Lemma~\ref{lem:cosetspace}.  
The next question we address is whether any unitaries that use this
basis can be implemented efficiently or not.

\section{The hardness results}\label{sec:hardnessresults}

In general we would like to understand unitaries that can be used to
decide if a state is in the coset space $C$ or in $C^\bot$. In this
section we look at two types of unitaries using the subset sum basis,
plus an extension of each one:
\begin{enumerate}
\item A unitary $\Us$ that maps every basis vector
  $|S_{\vecl,p}^m\rangle|\chi_\vecl\rangle$ to a standard basis state.
  Note that if these standard basis states specify $p$ and $\vecl$, then this can be
  used to solve the worst case subset sum, but we are allowing a more
  general type of unitary here.
\item A unitary $\Uc$ that maps every basis vector $|S_{\vecl,p}^m\rangle|\chi_\vecl\rangle$ to
  $|m= 0?\rangle|\phi_{\vecl,p}^m\rangle$, indicating whether or
  not the state is in the coset space.

\item A unitary $U=\tUs$ that satisfies condition (1) or $U=\tUc$ that
  satisfies (2), but $U$ uses a
    slightly more general basis, where any basis can be chosen for each
    $(\vecl,p)$ subspace $\Span\{|S_{\vecl,p}^m,\chi_{\vecl}\rangle:
    m \geq 1\}$.
\end{enumerate}

For the last type we use any basis satisfying the following
definition.  

\begin{definition}
  \label{def:newbasis}
  Let
  $\tilde{\basisname}^0=\basisname^0
  =\{|S_{\vecl,p}^0\rangle|\chi_{\vecl}\rangle : \vecl \in \Z_N^k,
  p\in \Z_N, m=0, |T_{\vecl,p}|\geq1 \}$
  be as in Definition~\ref{def:basis}, and let
  $\tilde{\basisname}^\perp=\{|\tilde{S}_{\vecl,p}^m\rangle
  |\chi_\vecl\rangle : \vecl\in \Z_N^k, p \in \Z_N, m\in
  \{1,\dots,|T_{\vecl,p}|-1\}, |T_{\vecl,p}|\geq 2\}$
  be an orthogonal basis such that
  $\Span(\{|\tilde{S}_{\vecl,p}^m \rangle : m\in
  \{1,\dots,|T_{\vecl,p}|-1\} \}) =\Span(\{|S_{\vecl,p}^m \rangle :
  m\in \{1,\dots,|T_{\vecl,p}|-1\} \})$
  for all $\vecl, p$.
\end{definition}

We show that unitaries of type 1 above can be used to solve random
subset sum for the cryptographic density $\rho$ a constant greater
than $1$, indicating that such a unitary may be hard to implement.
This strengthens the result in~\cite{Bacon06} which is a special case
where the unitary must perform quantum sampling, i.e., map an input
$|\vecl,p\rangle$ to a superposition of solutions
$|\vecl,S_{\vecl,p}^0\rangle$.  Such a unitary implementing quantum
sampling can used to solve worst-case subset sum by taking an input
$|0,p,\vecl\rangle$, applying $U$ inverse to get
$|S_{\vecl,p}^0\rangle|\vecl\rangle$ and measuring, since this is a
uniform superposition of solutions.

An algorithm that uses the subset sum basis to solve the DCSP needs to
decide if $m=0$ (for $C$), or $m>0$ (for $C^\bot$).  The second type
of unitary above allows an arbitrary unitary that writes the answer in
the first bit.  We show that such a unitary can solve random collision
for density $\rho=1+c\log \log N/\log N$.  This may indicate that no
such unitary can be efficiently implemented, although we are less
clear on the difficulty of the random collision problem.

The third type of unitary allows an arbitrary basis within each
subspace of solutions, but does not mix solutions of different inputs
$\vecl, p$.  Note that $|S_{\vecl,p}^0\rangle$ cannot change in this
case, since it is one dimension in $\basisname^0$.  Let
$\tilde{\basisname} = \tilde{\basisname}^0\cup
\tilde{\basisname}^\perp$ be the basis used by the unitary.

The proofs work by using $\Us$ to solve the worst-case collision
problem, or $\Uc$, $\tUs$, or $\tUc$ to solve the random collision
problem.

\subsection{Unitary mapping to a standard basis}\label{subsec:ssb_standard}

First we give an algorithm that finds a solution to the worst-case
collision problem when given a unitary $\Us$ that maps the subset
sum basis $\basisname$ to the standard basis in an arbitrary way.  Given
$\vec{b}$ and $\vec{l}$ where $\vec{b}\in \mathbb{Z}_2^k$ and
$\vec{l}\in \mathbb{Z}_N^k$, the task in the worst-case collision
problem is to find $\vecb'\neq \vecb$ such that
$\vecb'\cdot\vec{l}=\vec{b}\cdot \vec{l}$.

\begin{algorithm}\label{alg:register}
On input $\vecl\in \mathbb{Z}_N^{k}$ and $\vecb\in \mathbb{Z}_2^k$:
\begin{enumerate}
	\item Prepare the quantum state $|\vec{b},\, \vecl\rangle$.
	\item Apply $QFT_{N}^k$ on $\vecl$, then the state becomes $|\vec{b}\rangle|\chi_{\vecl}\rangle$.
	\item Apply $\Us$ to $|\vec{b}\rangle|\chi_{\vecl}\rangle$.
	\item Measure $\Us(|\vec{b}\rangle|\chi_{\vecl}\rangle)$ in the standard basis.
	\item Apply $\Us^{\dagger} $.
	\item Measure value $\vecb'$ in the first register.
\end{enumerate}
\end{algorithm}
Here $QFT_{N}^k$ is the quantum Fourier transform over $\mathbb{Z}_N^{k}$.

\begin{theorem}\label{lem:ssb_to_standard}
  If there exists an efficient unitary operator $\Us$, where $\Us$ is a
  bijection between the subset sum basis and the standard basis, then
  the worst-case collision problem can be solved efficiently by a
  quantum algorithm.
  Therefore random subset sum with density a
  constant greater than 1 can also be solved.
\end{theorem}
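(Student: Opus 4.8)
The plan is to analyze Algorithm~\ref{alg:register} and show that on input $(\vecb,\vecl)$ it outputs a uniformly random element of $T_{\vecl,p}$, where $p=\vecb\cdot\vecl \bmod N$. The starting observation is that after Step~2 the state $|\vecb\rangle|\chi_\vecl\rangle$ lies entirely in the span of the $|T_{\vecl,p}|$ subset sum basis vectors $\{|S^m_{\vecl,p}\rangle|\chi_\vecl\rangle : 0\le m\le |T_{\vecl,p}|-1\}$ for this single pair $(\vecl,p)$. Indeed, orthonormality of the Fourier states forces the second register to match $\vecl$, and the support condition $\vecb\cdot\vecl=p'$ forces $p'=p$; among the surviving vectors the coefficients $\omega_{|T_{\vecl,p}|}^{mj}/\sqrt{|T_{\vecl,p}|}$ form exactly an inverse DFT, so writing $\vecb=\vecb^{(j_0)}$ gives $|\vecb\rangle=\tfrac{1}{\sqrt{|T_{\vecl,p}|}}\sum_m \omega_{|T_{\vecl,p}|}^{-mj_0}|S^m_{\vecl,p}\rangle$. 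Thus the input is a phase-weighted, uniform-magnitude superposition over these basis vectors.

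Next I would push this through $\Us$. Since $\Us$ is a bijection from the subset sum basis to the standard basis, it sends the $|T_{\vecl,p}|$ distinct vectors $|S^m_{\vecl,p}\rangle|\chi_\vecl\rangle$ to $|T_{\vecl,p}|$ distinct standard basis states, so $\Us|\vecb\rangle|\chi_\vecl\rangle$ is supported on those states with each amplitude of magnitude $1/\sqrt{|T_{\vecl,p}|}$. The intermediate standard-basis measurement in Step~4 therefore collapses to one such state, which selects some $m^*\in\{0,\dots,|T_{\vecl,p}|-1\}$, and applying $\Us^\dagger$ in Step~5 recovers exactly $|S^{m^*}_{\vecl,p}\rangle|\chi_\vecl\rangle$. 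Because every $|S^{m^*}_{\vecl,p}\rangle$ assigns equal magnitude $1/\sqrt{|T_{\vecl,p}|}$ to each solution $\vecb^{(j)}$, measuring the first register in Step~6 returns a vector $\vecb'$ drawn uniformly from $T_{\vecl,p}$, independently of which $m^*$ was obtained.

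The conclusion for worst-case collision then follows from a simple probability bound: whenever a collision exists we have $|T_{\vecl,p}|\ge 2$, so $\prob[\vecb'\neq\vecb]=1-1/|T_{\vecl,p}|\ge 1/2$, and since $\vecb'\cdot\vecl=p=\vecb\cdot\vecl$ holds automatically, repeating the algorithm a constant number of times drives the success probability arbitrarily close to $1$. For the second sentence I would observe that the random collision problem of Definition~\ref{def:randcollision} is merely the special case of worst-case collision in which $\vecb$ is fixed and $\vecl$ is drawn at random, so the same $\Us$-based algorithm solves it; invoking Corollary~\ref{cor:rssp_rsscp_eqv} then converts this into a solver for random subset sum with $N$ a power of two and density a constant greater than $1$.

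The step I expect to carry the real content is recognizing why the intermediate measurement is indispensable. Without it, $\Us^\dagger\Us$ is the identity and Step~6 would return the original $\vecb$ with certainty. The measurement is precisely what erases the phase pattern $\omega_{|T_{\vecl,p}|}^{-mj_0}$ that encodes the input solution, leaving a post-measurement state $|S^{m^*}_{\vecl,p}\rangle$ whose distribution over solutions is uniform and hence blind to which $\vecb$ we began with. Once this point is isolated, the remaining ingredients---the inverse-DFT expansion of $|\vecb\rangle$ and the two uniformity claims---are routine.
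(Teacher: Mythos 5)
Your proposal is correct and follows essentially the same route as the paper's proof: expand $|\vecb,\chi_\vecl\rangle$ as an inverse-DFT combination of the $|S^m_{\vecl,p}\rangle|\chi_\vecl\rangle$, apply $\Us$, measure to collapse onto a single image state, apply $\Us^\dagger$ to obtain some $|S^{m^*}_{\vecl,p}\rangle|\chi_\vecl\rangle$, and measure to get a uniformly random solution, succeeding with probability $1-1/|T_{\vecl,p}|$; the final reduction via Corollary~\ref{cor:rssp_rsscp_eqv} is also the paper's argument. Your added remarks (the support argument for why only the $(\vecl,p)$ block appears, and why the intermediate measurement is what destroys the phase information) are correct elaborations rather than a different approach.
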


\begin{proof}
  Given $\vecl$ and $\vecb$ as input, let $p=\vecl\cdot\vecb$ and
  $T=T_{\vecl,p}$. For $\vecb =\vecb^{(j_0)} \in T$, after computing
  the Fourier transform of the second register, the
  resulting state 
  $|\vecb^{(j_0)},\chi_\vecl\rangle$ can be wrtten in the subset
  sum basis as
  \[
  |\vecb^{(j_0)},\chi_\vecl\rangle = \frac{1}{\sqrt{|T|}}
  \sum_{m=0}^{|T|-1}\omega^{-j_0m}|S_{\vecl,p}^{m}\rangle|\chi_\vecl\rangle.
  \]
  Applying $\Us$ to this state gives the state
  \begin{equation}
    \label{eq:collision_a_U}
    \frac{1}{\sqrt{|T|}}\sum_{m=0}^{|T|-1}\omega^{-j_0m}|D_{\vecl,p}^m\rangle,
  \end{equation}
  where $|D_{\vecl,p}^m\rangle:=\Us(|S_{\vecl,p}^m\rangle|\chi_\vecl\rangle)$ is
  a standard basis vector by assumption on $\Us$. Measuring the
  state in Equation~(\ref{eq:collision_a_U}) in the standard basis
  gives $|D_{\vecl,p}^m\rangle$ for some 
  $m\in [0:|T|-1]$. Applying $\Us^{\dagger}$ to
 $|D_{\vecl,p}^m\rangle$ gives
 $|S_{\vecl,p}^m\rangle|\chi_\vecl\rangle$, where the first register is
 $|S_{\vecl,p}^m\rangle= \frac{1}{\sqrt{|T|}}
   \sum_{j=0}^{|T|-1}\omega^{jm}|\vecb^{(j)}\rangle$
   in the standard basis.  Measuring this gives a vector $\vecb'\neq \vecb$ with
   probability $\frac{|T|-1}{|T|}$. 

   Theorem~\ref{thm:rssp_rsscp_eqv} reduces random subset sum 
   to solving the random collision problem for constant density
   greater than one, so random subset sum
   also reduces to the worst case collision problem.
\end{proof}

The proof that Algorithm~\ref{alg:register} works used a special
property of the subset sum basis, 
which is that every basis vector $|S_{\vecl,p}^m\rangle$ spreads the
solutions with equal magnitude.  When this is not the case then
the algorithm does not work for the worst case collision problem.  However, we will later
show that it solves the random collision problem, as long as the
number of solutions is not too large.

First we describe an example basis where the algorithm fails.  The idea is
that the unitary can map a solution 
vector $|\vecb, \chi_\ell\rangle$ to a vector that is very close to itself.
In that case the algorithm will measure the same value $\vecb$ that it started
with and not solve the collision problem, which can be seen as
follows.  Let $\vecb = \vecb^{(0)}$, let 
\[
    |\hat{S}^{1}\rangle|\chi_\vecl\rangle = \frac{1}{\sqrt{|T|-1}}\sum_{m=1}^{|T|-1}|S_{\vecl,p}^{m}\rangle|\chi_\vecl\rangle,
\]
and pick arbitrary orthonormal vectors
$|\hat{S}^{2}\rangle,\dots,|\hat{S}^{|T|-1}\rangle$ to form a basis for the subspace
$\Span(\{|S_{\vecl,p}^{m}\rangle|\chi_\vecl\rangle: m\in
[1:|T|-1]\})$.
Note that
$|\langle\vecb,\chi_\vecl|\hat{S}^{1}, \chi_\vecl\rangle|^2 =
\frac{|T|-1}{|T|}$,
which implies that one gets $\Us(|\hat{S}^{1}\rangle |\chi_\vecl\rangle)$ 
with probability $\frac{|T|-1}{|T|}$ after applying $\Us$ and
measuring in the standard basis.  In that case, applying
$\Us^{\dagger}$ results in the input vector $\vecb$.  Therefore, given
a unitary mapping this new basis
$\{|S^{0},\chi_\vecl\rangle, |\hat{S}^{1},
\chi_\vecl\rangle,\dots,|\hat{S}^{|T|-1}, \chi_\vecl\rangle\}$
to standard basis, the algorithm returns an answer $\vecb'\neq \vecb$
happens with probability $1/|T|$.  The number of solutions $|T|$ can
be very large for larger densities.

Next we show that if we limit the size of $T$, then random collision
can be solved.

\begin{corollary}\label{lem:arbitrary_to_standard}
  Suppose Algorithm~\ref{alg:register} is run with $\tUs$.  If $\tUs$
  is an efficient unitary operator which maps every state in
  $\tilde{\basisname}$ to an arbitrary state in the standard basis,
  then on input $\vecl, \vecb$, the algorithm solves the collision problem with
  probability at least
  $\frac{1}{|T_{\vecl,p}|}(1-\frac{1}{|T_{\vecl,p}|})$, where
  $p=\vecl \cdot \vecb$.  In particular, when
  $k\leq \log N + c \log \log N$, the random collision problem can be
  solved in quantum polynomial time.
\end{corollary}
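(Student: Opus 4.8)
The plan is to reuse the mechanism of Theorem~\ref{lem:ssb_to_standard}, but to extract only the single measurement branch that is guaranteed to behave correctly for the more general basis $\tilde{\basisname}$, and then to control the number of solutions $|T_{\vecl,p}|$ using the density hypothesis. The key structural observation is that the $m=0$ vector is \emph{common} to both bases: by Definition~\ref{def:newbasis} we have $\tilde{\basisname}^0 = \basisname^0$, so $|S_{\vecl,p}^0\rangle|\chi_\vecl\rangle$ is still a basis vector of $\tilde{\basisname}$, and only the $m\geq 1$ subspace is re-expressed. Writing $\vecb = \vecb^{(j_0)}$ and $p = \vecl\cdot\vecb$, the post-Fourier state $|\vecb^{(j_0)},\chi_\vecl\rangle$ therefore decomposes as $\tfrac{1}{\sqrt{|T|}}|S_{\vecl,p}^0\rangle|\chi_\vecl\rangle$ plus a vector lying entirely in $\Span\{|\tilde{S}_{\vecl,p}^m\rangle|\chi_\vecl\rangle : m\geq 1\}$, where $|T| = |T_{\vecl,p}|$ and the amplitude $\tfrac{1}{\sqrt{|T|}}$ is just $\langle S_{\vecl,p}^0|\vecb^{(j_0)}\rangle$.

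From here the probability bound is essentially immediate. Since $\tUs$ sends every $\tilde{\basisname}$ vector to a standard basis state, measuring $\tUs|\vecb^{(j_0)},\chi_\vecl\rangle$ in the standard basis collapses onto $\tUs(|S_{\vecl,p}^0\rangle|\chi_\vecl\rangle)$ with probability exactly $\tfrac{1}{|T|}$. Conditioned on that branch, applying $\tUs^\dagger$ restores $|S_{\vecl,p}^0\rangle|\chi_\vecl\rangle$ exactly, whose first register is the uniform superposition over the $|T|$ solutions in $T_{\vecl,p}$; measuring it returns some $\vecb'\neq\vecb$ with probability $\tfrac{|T|-1}{|T|} = 1-\tfrac{1}{|T|}$. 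Because the remaining $m\geq 1$ branches can only contribute additional successes, the overall success probability is at least $\tfrac{1}{|T|}(1-\tfrac{1}{|T|})$. The arbitrary choice of basis on the $m\geq 1$ subspace never enters this estimate, since the lower bound is supported entirely on the $m=0$ branch where $\tilde{\basisname}$ and $\basisname$ coincide.

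For the ``in particular'' clause I would translate this bound into an efficient random-collision solver under the density restriction. For fixed $\vecb$ and random $\vecl\in\Z_N^k$, a short counting argument gives $\Exp_\vecl[|T_{\vecl,p}|] = 1 + (2^k-1)/N$: the vector $\vecb$ is always a solution, and for each $\vecb'\neq\vecb$ the event $(\vecb'-\vecb)\cdot\vecl \equiv 0 \pmod N$ has probability $1/N$, since the nonzero entries of $\vecb'-\vecb$ are units modulo the power of two $N$. With $k\leq \log N + c\log\log N$ we get $2^k \leq N(\log N)^c$, so this expectation is $O((\log N)^c)$. By Markov's inequality $|T_{\vecl,p}| = O((\log N)^c)$ for at least half of the instances, and on such an instance, whenever a collision exists (that is, $|T|\geq 2$) the per-run success probability is $\Omega(1/(\log N)^c)$. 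Repeating the algorithm $\text{poly}(\log N)$ times then finds a collision with high probability, and since the input length $k\log N = O((\log N)^2)$ is itself $\text{poly}(\log N)$, this is quantum polynomial time.

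I expect the main obstacle to be the efficiency clause rather than the probability bound. The probability bound follows cleanly once one isolates the basis-independent $m=0$ component. The delicate part is handling the random variable $|T_{\vecl,p}|$ itself (not merely its mean): one must argue that $|T|$ stays polylogarithmic for a noticeable fraction of $\vecl$, which I control via Markov, and one must separately dispose of the degenerate instances with $|T|=1$, where no collision exists and the success probability $\tfrac{1}{|T|}(1-\tfrac{1}{|T|})$ correctly vanishes.
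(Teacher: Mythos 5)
Your proposal is correct, and its core coincides with the paper's own proof: both decompose $|\vecb,\chi_\vecl\rangle$ in $\tilde{\basisname}$ as $\tfrac{1}{\sqrt{|T|}}|S_{\vecl,p}^0\rangle|\chi_\vecl\rangle$ plus a vector supported on the $m\geq 1$ span (using that $\tilde{\basisname}^0=\basisname^0$, so the $m=0$ component is basis-independent), note that the standard-basis measurement after $\tUs$ collapses onto $\tUs(|S_{\vecl,p}^0\rangle|\chi_\vecl\rangle)$ with probability $1/|T|$, and then invert $\tUs$ and measure a uniform superposition of solutions, yielding $\vecb'\neq\vecb$ with conditional probability $1-1/|T|$, for an overall lower bound of $\tfrac{1}{|T|}(1-\tfrac{1}{|T|})$.

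Where you genuinely diverge is the ``in particular'' clause. The paper invokes Claim~\ref{claim:T_bound}, whose appendix proof bounds $|T_{\vecl,p}|$ via Chebyshev's inequality and therefore must compute variances and verify that the covariances $\cov(X_{\vecb'},X_{\vecb''})$ vanish through a pairwise-independence argument. You instead compute only the first moment, $\Exp_\vecl[|T_{\vecl,p}|]=1+(2^k-1)/N=O((\log N)^c)$, and apply Markov's inequality. This is more elementary (no covariance computation at all) and yields a statement that is, if anything, cleaner than Claim~\ref{claim:T_bound}: a constant fraction of instances satisfy $|T_{\vecl,p}|=O((\log N)^c)$, rather than a $1/poly(k)$ fraction, and this suffices for the corollary exactly as you argue. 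Your explicit disposal of the degenerate instances with $|T_{\vecl,p}|=1$, where no collision exists and the bound correctly vanishes, is also a point the paper's proof passes over silently. One trivial slip: you justify $\prob_\vecl[(\vecb'-\vecb)\cdot\vecl\equiv 0 \pmod N]=1/N$ by calling the nonzero entries of $\vecb'-\vecb$ ``units modulo the power of two $N$''; no power-of-two hypothesis is needed (or assumed in this corollary), since those entries are $\pm 1$, which are units modulo every $N$.
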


\begin{proof}

  Similar to the proof
  for Theorem~\ref{lem:ssb_to_standard}, first represent
  $|\vecb,\chi_\vecl\rangle$ as a linear combination of states in
  $\tilde{\basisname}$ as follows:
\[
    |\vecb,\chi_\vecl\rangle = \frac{1}{\sqrt{|T|}}|S^0\rangle|\chi_\vecl\rangle + \sqrt{\frac{|T|-1}{|T|}}(\sum_{m=1}^{|T|-1}c_m|\tilde{S}^m\rangle)|\chi_\vecl\rangle,  
\]
where $T=T_{\vecl,p}$ and $\tilde{S}^m=\tilde{S}_{\vecl,p}^m$. 

After applying the unitary $\tUs$, the state is
\begin{equation}\label{eq:tildeU_standard}
    \tUs|\vecb,\chi_\vecl\rangle = \frac{1}{\sqrt{|T|}}|D^0\rangle + \sqrt{\frac{|T|-1}{|T|}}(\sum_{m=1}^{|T|-1}c_m|D^m\rangle),
\end{equation}
where $D^m$ for $m\in [0:|T|-1]$ are arbitrary distinct states in the standard basis. 
By measuring the state in the Equation~(\ref{eq:tildeU_standard}) in
the standard basis, $|D^0\rangle$ is measured with probability
$1/|T|$. Then applying $\tUs^{\dagger}$ and measuring the output state
in the standard basis gives $\vecb'\neq \vecb$ with probability
$\frac{|T|-1}{|T|}$. Based on the Claim~\ref{claim:T_bound}, $|T| =
poly(k) $ with high probability. Thus, given a random input $\vecb$
and $\vecl$, the probability to get $\vecb'\neq \vecb$ using $\tUs$ in
Algorithm~\ref{alg:register} is at least $\frac{|T|-1}{|T|^2}=1/poly(k)$.  
\end{proof}

\subsection{Deciding membership in C}\label{subsec:D_C}

The unitary $\Us$ illustrated how our algorithm works and used the
subset sum basis, but $\Us$ may not be useful for distinguishing $C$
from $C^\perp$ in general. Next we consider a unitary $\Uc$ that can distinguish
$C$ from $C^\perp$. Suppose $\Uc$ works on a larger Hilbert space to
have work space and exactly distinguishes $\basisname^0$ from
$\basisname^\perp$ in the first qubit as follows:

\begin{definition} \label{def:one_qubit_m}
  Let $\Uc$ be a unitary operator such that 
  \[
\Uc( |S_{\vecl,p}^m\rangle|\chi_\vecl \rangle |0\rangle) =
 \left\{ \begin{array}{ll}
 	|0\rangle|\psi_{\vecl,p,0}\rangle &\mbox{ if $m=0$}	\\
 	|1\rangle|\psi_{\vecl,p,m }\rangle &\mbox{ otherwise}
  \end{array} \right.
\]
  where
  $\{| \psi_{\vecl,p,m}\rangle: \vecl\in \mathbb{Z}_N^k,\, p\in
  \mathbb{Z}_N,\, m\in [0:|T_{\vecl,p}|-1]\}$
  are states resulting from applying $\Uc$ and the third register is a
  workspace initialized to $|0\rangle$.
\end{definition}

We modify Algorithm~\ref{alg:register} so that only the first bit is
measured in step four, and $\Uc$ is used instead of $\Us$.

\begin{algorithm}\label{alg:bit}
On input $\vecl\in \mathbb{Z}_N^{k}$ and $\vecb\in \mathbb{Z}_2^k$:
\begin{enumerate}
	\item Prepare the quantum state $|\vec{b},\, \vecl\rangle$.
	\item Apply $QFT_{N}^k$ on $\vecl$, then the state becomes $|\vec{b}\rangle|\chi_{\vecl}\rangle$.
	\item Apply $\Uc$ to $|\vec{b}\rangle|\chi_{\vecl}\rangle$.
	\item Measure the first qubit of
          $\Uc(|\vec{b}\rangle|\chi_{\vecl}\rangle)$ in the standard
          basis.
	\item Apply $\Uc^{\dagger} $.
	\item Measure the first register in the standard basis. 
\end{enumerate}
\end{algorithm}

\begin{theorem}\label{lem:relax_image}
  If $\Uc$ can be implemented efficiently, then
  Algorithm~\ref{alg:bit} solves the collision problem on input
  $\vecl, \vecb$ with probability
  $\frac{2}{|T_{\vecl,p}|}(1-\frac{1}{|T_{\vecl,p}|})$, where
  $p=\vecl\cdot \vecb$.  In particular, when
  $k \leq \log N + c \log \log N$ the random collision problem can be
  solved in quantum polynomial time.
\end{theorem}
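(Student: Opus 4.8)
The plan is to trace Algorithm~\ref{alg:bit} step by step, in the same spirit as the proof of Theorem~\ref{lem:ssb_to_standard}, but now tracking \emph{both} outcomes of the first-qubit measurement in step four, since each contributes a collision. Fix the input $\vecl,\vecb$, set $p=\vecl\cdot\vecb$ and $T=T_{\vecl,p}$, and write $\vecb=\vecb^{(j_0)}$. First I would rewrite the state produced after step two in the subset sum basis. Using the inverse discrete Fourier transform over $\Z_{|T|}$ relating $|\vecb^{(j)}\rangle$ to the $|S_{\vecl,p}^m\rangle$, exactly as in Theorem~\ref{lem:ssb_to_standard}, one obtains
\[
|\vecb^{(j_0)}\rangle|\chi_\vecl\rangle|0\rangle
= \tfrac{1}{\sqrt{|T|}}\sum_{m=0}^{|T|-1}\omega^{-j_0 m}|S_{\vecl,p}^m\rangle|\chi_\vecl\rangle|0\rangle,
\]
and then applying $\Uc$ (step three), using Definition~\ref{def:one_qubit_m}, separates this into an $m=0$ piece carrying first qubit $|0\rangle$ and an $m\geq 1$ piece carrying first qubit $|1\rangle$:
\[
\Uc|\vecb^{(j_0)},\chi_\vecl,0\rangle
= \tfrac{1}{\sqrt{|T|}}|0\rangle|\psi_{\vecl,p,0}\rangle
+ \tfrac{1}{\sqrt{|T|}}\sum_{m=1}^{|T|-1}\omega^{-j_0 m}|1\rangle|\psi_{\vecl,p,m}\rangle.
\]

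The key step is to compute the success contribution of each first-qubit outcome separately. If the measurement returns $|0\rangle$ (probability $1/|T|$), the residual state is $|0\rangle|\psi_{\vecl,p,0}\rangle$, and since $\Uc$ is unitary, applying $\Uc^\dagger$ returns $|S_{\vecl,p}^0\rangle|\chi_\vecl\rangle|0\rangle$, whose first register is the uniform superposition over all solutions; step six then yields $\vecb'\neq\vecb$ with probability $(|T|-1)/|T|$. If the measurement returns $|1\rangle$ (probability $(|T|-1)/|T|$), the renormalized state is $|1\rangle\otimes\tfrac{1}{\sqrt{|T|-1}}\sum_{m=1}^{|T|-1}\omega^{-j_0 m}|\psi_{\vecl,p,m}\rangle$, and $\Uc^\dagger$ sends it to $\tfrac{1}{\sqrt{|T|-1}}\sum_{m=1}^{|T|-1}\omega^{-j_0 m}|S_{\vecl,p}^m\rangle$ in the first register. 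Expanding in the standard basis and using the partial geometric sum
\[
\sum_{m=1}^{|T|-1}\omega^{m(j-j_0)} = |T|\,\delta_{j,j_0}-1,
\]
the amplitude on $\vecb^{(j_0)}=\vecb$ is $\sqrt{(|T|-1)/|T|}$ while each of the other $|T|-1$ solutions carries amplitude $-1/\sqrt{|T|(|T|-1)}$, so step six yields some $\vecb'\neq\vecb$ with probability $1/|T|$. Weighting each branch by its conditional success probability and summing gives
\[
\frac{1}{|T|}\cdot\frac{|T|-1}{|T|}
+ \frac{|T|-1}{|T|}\cdot\frac{1}{|T|}
= \frac{2}{|T|}\Bigl(1-\frac{1}{|T|}\Bigr),
\]
the claimed per-input bound. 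This is also where the factor of two relative to Corollary~\ref{lem:arbitrary_to_standard} appears: there only the $m=0$ branch was guaranteed to spread uniformly over solutions, whereas here the orthogonal $m\geq 1$ subspace reflects back, via the original (uniform-magnitude) vectors $|S^m\rangle$, onto a near-uniform spread as well.

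For the ``in particular'' statement I would invoke Claim~\ref{claim:T_bound}, as in Corollary~\ref{lem:arbitrary_to_standard}, to argue that when $k\leq\log N + c\log\log N$ (equivalently density $\rho=1+c\log\log N/\log N$) the number of solutions satisfies $|T|=poly(k)$ with high probability over the random $\vecl$. On such inputs the success probability $\tfrac{2}{|T|}(1-\tfrac{1}{|T|})$ is $1/poly(k)$, and since every step of Algorithm~\ref{alg:bit} runs in polynomial time given an efficient $\Uc$, repetition and amplification solve the random collision problem in quantum polynomial time. I expect the main obstacle to be the careful bookkeeping in the $|1\rangle$ branch: one must verify that applying $\Uc^\dagger$ to the renormalized $|1\rangle$-outcome state reproduces exactly $\sum_{m\geq 1}\omega^{-j_0 m}|S_{\vecl,p}^m\rangle$ in the first register (using unitarity of $\Uc$ and that the $|\psi_{\vecl,p,m}\rangle$ introduce no extra phases), and then evaluate the resulting standard-basis amplitudes through the partial geometric sum above.
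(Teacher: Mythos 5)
Your proposal is correct and follows essentially the same route as the paper's own proof: the same subset-sum-basis expansion of $|\vecb^{(j_0)},\chi_\vecl\rangle$, the same two-branch analysis of the first-qubit measurement (with the $|0\rangle$ branch succeeding with probability $1-1/|T|$ and the $|1\rangle$ branch, after $\Uc^\dagger$, succeeding with probability $1/|T|$), and the same appeal to Claim~\ref{claim:T_bound} for the parameter range $k\leq \log N + c\log\log N$. The only cosmetic difference is that you compute all standard-basis amplitudes in the $|1\rangle$ branch via the partial geometric sum, whereas the paper computes only the amplitude on $\vecb^{(j_0)}$ and takes the complement—these are equivalent.
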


\begin{proof}
  Given $\vecl$ and $\vecb$ as input, let $p=\vecl\cdot\vecb$ and
  $T=T_{\vecl,p}$. For $\vecb =\vecb^{(j_0)} \in T$, after computing
  the Fourier transform of the second register, we can write the
  resulting state 
  $|\vecb^{(j_0)},\chi_\vecl\rangle$ in the subset
  sum basis as follows:
 \[
  |\vecb^{(j_0)},\chi_\vecl\rangle = \frac{1}{\sqrt{|T|}}
  \sum_{m=0}^{|T|-1}\omega^{-j_0m}|S_{\vecl,p}^{m}\rangle|\chi_\vecl\rangle.
 \]
  Applying $\Uc$ to this state plus a work register results in
  \begin{equation}
    \label{eq:collision_a_U'}
    \frac{1}{\sqrt{|T|}}(|0\rangle|\psi_{\vecl,p,0}\rangle +
    \sum_{m=1}^{|T|-1}\omega^{-j_0m}|1\rangle |\psi_{\vecl,p,m}\rangle).
  \end{equation}
   Measuring the first qubit of the state in
  Equation~(\ref{eq:collision_a_U'}) 
  gives $|0\rangle|\psi_{\vecl,p,m}\rangle$ with probability $1/|T|$
  and
  $\frac{1}{\sqrt{|T|-1}}
  \sum_{m=1}^{|T|-1}\omega^{-j_0m}|1\rangle|\psi_{\vecl,p,m}\rangle$ 
  with probability $1-1/|T|$.
  
  Applying $\Uc^{\dagger}$ to the result gives
  $|S_{\vecl,p}^0\rangle|\chi_\vecl\rangle$ in the first case and
  \[
  \frac{1}{\sqrt{|T|-1}}\sum_{m=1}^{|T|-1}\omega^{-j_0m}|S_{\vecl,p}^m\rangle
  |\chi_\vecl\rangle
  \]
  in the second case.  
 
  Finally, the state is measured in the standard basis. In the first
  case, when a zero is measured in the first bit, which happens with
  probability $1/|T|$, a vector $\vecb'\neq \vecb$ is measured with
  probability $1-1/|T|$ in the last step.  In the second case when a
  one is measured the amplitude of $|\vecb^{(j_0)},\chi_\vecl\rangle$
  in
  $\frac{1}{\sqrt{|T|-1}}\sum_{m=1}^{|T|-1}
  \omega^{-j_0m}|S_{\vecl,p}^m\rangle|\chi_\vecl\rangle$ is
  \[
  \frac{1}{\sqrt{|T|-1}}\sum_{m=1}^{|T|-1}
  \omega^{-j_0m}
  \langle\vecb^{(j_0)},\chi_\vecl |S_{\vecl,p}^m\rangle|\chi_\vecl\rangle
  =
  \frac{1}{\sqrt{|T|-1}}\sum_{m=1}^{|T|-1}
  \omega^{-j_0m}
  \langle\vecb^{(j_0)} |S_{\vecl,p}^m\rangle 
  \]
  \[
  =\frac{1}{\sqrt{(|T|-1)|T|}}\sum_{m=1}^{|T|-1}
  \omega^{-j_0m}\omega^{j_0m}
  =
  \frac{|T|-1}{\sqrt{(|T|-1)|T|}}.
  \]
  Thus, the probability that the
  measurement gives $\vecb'\neq \vecb$ is
  $1-\frac{(|T|-1)^2}{(|T|-1)|T|}=1/|T|$.  Therefore, the probability
  the algorithm returns $\vecb' \neq \vecb$ is 
  $\frac{2}{|T|}(1-\frac{1}{|T|})$. 

  By Claim~\ref{claim:T_bound} the probability that a randomly chosen
  $\vecl$ satisfies $|T_{\vecl,p}| \leq poly(k)$ is at least $1/poly(k)$ when
  $k=\log N + c \log \log N$. Thus, the random collision problem can
  be solved by repeating the algorithm $poly(k)$ times.
\end{proof}

Now we consider the case where an arbitrary basis can be used within
each subspace spanned by solutions of a given subset sum instance
$\vecl$, $p$ as in Definition~\ref{def:newbasis}.  Let $\tUc$ be a
unitary that maps every state in $\tilde{\basisname}$ to quantum state
whose first qubit indicates if the state is in $\tilde{\basisname}^0$
or $\tilde{\basisname}^{\bot}$

\begin{corollary}\label{lem:tUc}
  If Algorithm~\ref{alg:bit} is run with $\tUc$ on input
  $\vecl,\vecb$, then it solves the collision problem with probability
  at least $\frac{1}{|T_{\vecl,p}|}(1-\frac{1}{|T_{\vecl,p}|})$, where
  $p =\vecl\cdot\vecb$.  In particular, if
  $k\leq \log N + c \log \log N$ then it solves the random collision
  problem in quantum polynomial time.
\end{corollary}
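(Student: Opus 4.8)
The plan is to run the same argument as in Theorem~\ref{lem:relax_image}, observing that Algorithm~\ref{alg:bit} depends only on the subspace decomposition and not on the particular basis chosen inside each $C^\perp$ block. The key structural fact, from Definition~\ref{def:newbasis}, is that $\tilde{\basisname}^0 = \basisname^0$, so the vector $|S^0_{\vecl,p}\rangle$ is fixed, and $\tUc$ still routes the first qubit to $|0\rangle$ exactly on the line $\Span(|S^0_{\vecl,p}\rangle|\chi_\vecl\rangle)$ and to $|1\rangle$ on the orthogonal block $V_{\vecl,p}=\Span\{|S^m_{\vecl,p}\rangle|\chi_\vecl\rangle:m\geq 1\}$. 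By Definition~\ref{def:newbasis} this block also equals $\Span\{|\tilde S^m_{\vecl,p}\rangle|\chi_\vecl\rangle:m\geq 1\}$, so the two subspaces being measured in step~4 are basis-independent.

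First I would decompose the input exactly as in Corollary~\ref{lem:arbitrary_to_standard}, writing
\[
|\vecb,\chi_\vecl\rangle = \tfrac{1}{\sqrt{|T|}}|S^0\rangle|\chi_\vecl\rangle + \sqrt{\tfrac{|T|-1}{|T|}}\Bigl(\sum_{m=1}^{|T|-1}c_m|\tilde S^m\rangle\Bigr)|\chi_\vecl\rangle,
\]
where $T=T_{\vecl,p}$ and the first term is the projection onto $\Span(\basisname^0)$, with squared weight $1/|T|$. Applying $\tUc$ sends these two parts to the $|0\rangle$ and $|1\rangle$ branches, so measuring the first qubit yields $|0\rangle$ with probability $1/|T|$. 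In that branch, $\tUc^{\dagger}$ returns $|S^0\rangle|\chi_\vecl\rangle$; since $|S^0\rangle=\tfrac{1}{\sqrt{|T|}}\sum_j|\vecb^{(j)}\rangle$ is the uniform superposition over all solutions (and is unchanged by the more general basis), measuring the first register gives $\vecb'\neq\vecb$ with probability $(|T|-1)/|T|$. This branch alone contributes $\tfrac{1}{|T|}(1-\tfrac{1}{|T|})$, which already establishes the stated lower bound.

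The step needing care is ruling out that the adversarial freedom in $\tilde{\basisname}^\perp$ spoils the $|1\rangle$ branch, as it does in Corollary~\ref{lem:arbitrary_to_standard}. The point is that step~4 of Algorithm~\ref{alg:bit} measures only the first qubit, so the $|1\rangle$ branch collapses to the entire normalized projection of $|\vecb,\chi_\vecl\rangle$ onto $V_{\vecl,p}$ rather than to a single basis vector; applying $\tUc^{\dagger}$ maps this back into $V_{\vecl,p}$, and since $V_{\vecl,p}$ is basis-independent the resulting state and its overlap with $|\vecb\rangle$ are identical to Theorem~\ref{lem:relax_image}. Hence the $|1\rangle$ branch in fact also yields a collision with probability $1/|T|$, so the true success probability is $\tfrac{2}{|T|}(1-\tfrac{1}{|T|})$, though the weaker $|0\rangle$-branch bound suffices for the claim. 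Finally, by Claim~\ref{claim:T_bound}, when $k\leq \log N + c\log\log N$ a random $\vecl$ has $|T_{\vecl,p}|=poly(k)$ with probability $1/poly(k)$, so each run succeeds with probability $1/poly(k)$ and repeating $poly(k)$ times solves the random collision problem in quantum polynomial time.
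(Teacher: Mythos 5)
Your proof is correct, and for the stated bound it follows essentially the same route as the paper: decompose $|\vecb,\chi_\vecl\rangle$ into its component along $|S^0_{\vecl,p}\rangle|\chi_\vecl\rangle$ (squared weight $1/|T|$, with $T=T_{\vecl,p}$) plus a component in the block $V_{\vecl,p}=\Span\{|S^m_{\vecl,p}\rangle|\chi_\vecl\rangle : m\geq 1\}$, condition on measuring $|0\rangle$ in step 4, invert $\tUc$ to recover the uniform superposition of solutions, and finish with Claim~\ref{claim:T_bound}; the paper's own proof likewise restricts attention to the case where the state collapses to $m=0$. Where you go beyond the paper is the $|1\rangle$-branch analysis, and your argument there is sound: since $\tUc$ maps the entire block $V_{\vecl,p}\otimes|0\rangle_{\rm work}$ into the $|1\rangle$ first-qubit subspace, measuring $|1\rangle$ and applying $\tUc^{\dagger}$ returns the \emph{normalized projection} of the input onto $V_{\vecl,p}$, which is basis-independent and therefore coincides exactly with the state analyzed in Theorem~\ref{lem:relax_image}; its overlap computation then yields a collision with conditional probability $1/|T|$, so the total success probability is in fact $\frac{2}{|T|}(1-\frac{1}{|T|})$, matching Theorem~\ref{lem:relax_image} rather than the weaker constant stated in the corollary (which, being a lower bound, remains true). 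Your remark also isolates correctly why the adversarial basis breaks Corollary~\ref{lem:arbitrary_to_standard} but not this result: step 4 of Algorithm~\ref{alg:bit} measures only the subspace label, never individual vectors of $\tilde{\basisname}^\perp$, so the arbitrary choice of that basis is invisible to the algorithm.
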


\begin{proof}
  Suppose $\tUc$ maps $|S_{\vecl,p}^0\rangle$ to a state
  $|0\rangle|\psi_{\vecl,p,0}\rangle$ and maps
  $|\tilde{S}_{\vecl,p}^m\rangle$ to
  $|1\rangle|\psi_{\vecl,p,m}\rangle$ for $m\in [1:|T|-1]$, where the
  set of vectors $\{|\psi_{\vecl,p,m}\rangle$ : $m\in [1:|T|-1]\}$ are
  an arbitrary orthonormal set of quantum states. The analysis is
  similar to the proof above, but we only consider the case when the
  state collapses to $m=0$. Specifically, after applying $\tUc$ to
  $|\vecb,\chi_\vecl\rangle$, the state is
\begin{equation}
    \tUc|\vecb,\chi_\vecl\rangle =
    \frac{1}{\sqrt{|T|}}|0\rangle|\psi_{\vecl,p,0}\rangle +
    \sqrt{\frac{|T|-1}{|T|}}(\sum_{m=1}^{|T|-1}c_m|1\rangle|\psi_{\vecl,p,m}\rangle).  
\end{equation}
 The probability the state collapses to $|0\rangle|\psi_{\vecl,p,0}\rangle$ after
 measuring the first qubit is $1/|T|$. After applying
 $\tUc^{\dagger}$ and measuring the state a vector $\vecb'\neq \vecb$
 is measured with probability $1-1/|T|$.  In total the probability of
 success is at least $\frac{|T|-1}{|T|^2}$.  For the choice of $k$
 given, this is at least $1/poly(k)$ with probability $1/poly(k)$ by
 Claim~\ref{claim:T_bound}.
\end{proof}

\bibliographystyle{plain}
\bibliography{fullpaper}

\begin{thebibliography}{10}

\bibitem{Bacon06}
D.~Bacon, A.~M. Childs, and Wim van Dam.
\newblock Optimal measurements for the dihedral hidden subgroup problem.
\newblock {\em Chicago J. Theor. Comput. Sci.}, 2006.

\bibitem{Coster92}
Matthijs~J. Coster, Antoine Joux, Brian~A. LaMacchia, Andrew~M. Odlyzko,
  Claus-Peter Schnorr, and Jacques Stern.
\newblock Improved low-density subset sum algorithms.
\newblock {\em Computational Complexity}, 2(2):111--128, 1992.

\bibitem{Decker2014}
Thomas Decker, G{\'a}bor Ivanyos, Raghav Kulkarni, Youming Qiao, and Miklos
  Santha.
\newblock {\em An Efficient Quantum Algorithm for Finding Hidden Parabolic
  Subgroups in the General Linear Group}, pages 226--238.
\newblock Springer Berlin Heidelberg, Berlin, Heidelberg, 2014.

\bibitem{Ettinger00}
Mark Ettinger and Peter H{\o}yer.
\newblock On quantum algorithms for noncommutative hidden subgroups.
\newblock {\em Advances in Applied Mathematics}, 25(3):239 -- 251, 2000.

\bibitem{FZ08}
Stephen Fenner and Yong Zhang.
\newblock {\em Theory and Applications of Models of Computation: 5th
  International Conference, TAMC 2008, Xi'an, China, April 25-29, 2008.
  Proceedings}, chapter On the Complexity of the Hidden Subgroup Problem, pages
  70--81.
\newblock Springer Berlin Heidelberg, Berlin, Heidelberg, 2008.

\bibitem{FIMSS14}
Katalin Friedl, G{\'{a}}bor Ivanyos, Fr{\'{e}}d{\'{e}}ric Magniez, Miklos
  Santha, and Pranab Sen.
\newblock Hidden translation and translating coset in quantum computing.
\newblock {\em {SIAM} J. Comput.}, 43(1):1--24, 2014.

\bibitem{HMRRS10}
Sean Hallgren, Cristopher Moore, Martin R\"{o}tteler, Alexander Russell, and
  Pranab Sen.
\newblock Limitations of quantum coset states for graph isomorphism.
\newblock {\em J. ACM}, 57:34:1--34:33, November 2010.

\bibitem{Impagliazzo96}
Russell Impagliazzo and Moni Naor.
\newblock Efficient cryptographic schemes provably as secure as subset sum.
\newblock {\em Journal of Cryptology}, 9:236--241, 1996.

\bibitem{Ivanyos2008}
G{\'a}bor Ivanyos, Luc Sanselme, and Miklos Santha.
\newblock {\em An Efficient Quantum Algorithm for the Hidden Subgroup Problem
  in Nil-2 Groups}, pages 759--771.
\newblock Springer Berlin Heidelberg, Berlin, Heidelberg, 2008.

\bibitem{Kuperberg05}
Greg Kuperberg.
\newblock A subexponential-time quantum algorithm for the dihedral hidden
  subgroup problem.
\newblock {\em SIAM J. Comput.}, 35(1):170--188, 2005.

\bibitem{Kup13}
Greg Kuperberg.
\newblock Another subexponential-time quantum algorithm for the dihedral hidden
  subgroup problem.
\newblock In {\em 8th Conference on the Theory of Quantum Computation,
  Communication and Cryptography, {TQC} 2013, May 21-23, 2013, Guelph, Canada},
  pages 20--34, 2013.

\bibitem{Lag85}
J.~C. Lagarias and A.~M. Odlyzko.
\newblock Solving low-density subset sum problems.
\newblock {\em Journal of the Association for Computing Machinery},
  32(1):229--246, 1985.

\bibitem{Regev04}
Oded Regev.
\newblock Quantum computation and lattice problems.
\newblock {\em SIAM J. Comput.}, 33(3):738--760, 2004.

\bibitem{Rogaway2004}
Phillip Rogaway and Thomas Shrimpton.
\newblock {\em Cryptographic Hash-Function Basics: Definitions, Implications,
  and Separations for Preimage Resistance, Second-Preimage Resistance, and
  Collision Resistance}, pages 371--388.
\newblock Springer Berlin Heidelberg, Berlin, Heidelberg, 2004.

\end{thebibliography}

\appendix\label{appendix}
\section{Appendix}

\begin{claim} \label{claim:T_bound} Let $\vecb\in \mathbb{Z}_2^k$ be
  an arbitrary 
  fixed vector with $k=\log N + c\log \log N$ for some constant $c$.  Then
  over random choices of $\vecl \in \mathbb{Z}_{N}^k$, the probability
  that $|T_{\vecl,\vecb\cdot\vecl}| \leq poly(k)$ is at least
  $\frac{1}{poly(k)}$.
\end{claim}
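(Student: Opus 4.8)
The plan is to control the size of $T_{\vecl,\vecb\cdot\vecl}$ in expectation over the random choice of $\vecl$ and then convert this into a high‑probability statement with Markov's inequality. First I would write the size as a sum of indicators over all candidate solutions: for each $\vecb'\in\mathbb{Z}_2^k$, let $\mathbb{1}_{\vecb'}$ be the event that $\vecb'\cdot\vecl\equiv\vecb\cdot\vecl\pmod N$, equivalently that $(\vecb'-\vecb)\cdot\vecl\equiv 0\pmod N$. Then $|T_{\vecl,\vecb\cdot\vecl}|=\sum_{\vecb'\in\mathbb{Z}_2^k}\mathbb{1}_{\vecb'}$, and by linearity of expectation $\Exp_{\vecl}[|T_{\vecl,\vecb\cdot\vecl}|]=\sum_{\vecb'\in\mathbb{Z}_2^k}\prob_{\vecl}[(\vecb'-\vecb)\cdot\vecl\equiv 0\pmod N]$.

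The crux is evaluating each term. Writing $\vec{a}=\vecb'-\vecb$, every coordinate satisfies $a_i\in\{-1,0,1\}$ because $\vecb,\vecb'\in\mathbb{Z}_2^k$. The term $\vecb'=\vecb$ contributes $1$. For $\vecb'\neq\vecb$ we have $\vec{a}\neq\vec{0}$, so some coordinate $a_{i_0}=\pm 1$ is a unit modulo $N$; fixing all other coordinates of $\vecl$ arbitrarily, the congruence $a_{i_0}l_{i_0}\equiv-\sum_{j\neq i_0}a_j l_j\pmod N$ has a unique solution for $l_{i_0}$, so exactly $N^{k-1}$ of the $N^{k}$ choices of $\vecl$ satisfy the constraint and the probability is exactly $1/N$. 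Hence $\Exp_{\vecl}[|T_{\vecl,\vecb\cdot\vecl}|]=1+(2^{k}-1)/N$.

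Next I would substitute $k=\log N+c\log\log N$ (base‑two logarithms), which gives $2^{k}=N\cdot 2^{c\log\log N}=N\,(\log N)^{c}$ and therefore $\Exp_{\vecl}[|T_{\vecl,\vecb\cdot\vecl}|]=1+(\log N)^{c}-1/N=O((\log N)^{c})$. Since $k=\Theta(\log N)$, this expectation is $poly(k)$. Markov's inequality then yields $\prob_{\vecl}[|T_{\vecl,\vecb\cdot\vecl}|\geq 2\,\Exp_{\vecl}[|T_{\vecl,\vecb\cdot\vecl}|]]\leq 1/2$, so with probability at least $1/2$ we have $|T_{\vecl,\vecb\cdot\vecl}|\leq 2\,\Exp_{\vecl}[|T_{\vecl,\vecb\cdot\vecl}|]=poly(k)$, which is in fact stronger than the claimed $1/poly(k)$ lower bound.

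The computation is elementary throughout, and the only point needing care is the observation that a nonzero difference vector $\vec{a}$ always has a coordinate equal to $\pm 1$, which is a unit for every $N\geq 2$; this is what pins each nontrivial per‑term probability at exactly $1/N$ and makes the argument independent of whether $N$ is a power of two. With that unit property in hand the rest is just linearity of expectation followed by Markov, so I do not expect a genuine obstacle beyond stating this step cleanly.
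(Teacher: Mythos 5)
Your proposal is correct, and it takes a genuinely different route from the paper. You bound only the first moment, $\Exp_{\vecl}[|T_{\vecl,\vecb\cdot\vecl}|]=1+(2^k-1)/N = O((\log N)^c)=poly(k)$, and finish with Markov's inequality, getting $|T_{\vecl,\vecb\cdot\vecl}|\leq poly(k)$ with probability at least $1/2$. The paper instead goes through the second moment: it computes $\var(X_{\vecb'})$, proves that $\cov(X_{\vecb'},X_{\vecb''})=0$ for distinct $\vecb',\vecb''\neq\vecb$ via a pairwise-independence calculation (which is the bulk of its proof), and then applies Chebyshev's inequality. Your approach buys simplicity: the covariance computation disappears entirely, and your per-term probability argument (a nonzero difference vector in $\{-1,0,1\}^k$ has a $\pm 1$ coordinate, which is a unit mod $N$, so the constraint fixes that coordinate uniquely) is cleaner than the paper's coordinate-swapping argument and manifestly independent of the arithmetic of $N$. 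The paper's approach buys a stronger conclusion: Chebyshev gives $|T_{\vecl,\vecb\cdot\vecl}|\leq poly(k)$ with probability $1-1/poly(k)$ rather than $1/2$. For the claim as stated, and for its uses in Theorem~\ref{lem:relax_image} and Corollaries~\ref{lem:arbitrary_to_standard} and~\ref{lem:tUc} (where the overall success probability is only $1/poly(k)$ anyway and the algorithm is repeated), your constant-probability bound is already more than sufficient, so nothing downstream is lost.
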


\begin{proof}
  Fix $\vec{b}\in \mathbb{Z}_2^k$ and let $X_{\vecb'}$ be a random
  variable over $\vecl$ such that $X_{\vecb'}=1$ if
  $\vecb'\cdot\vecl=\vec{b}\cdot\vecl$ and  $X_{\vecb'}=0$ otherwise.
  Then $|T_{\vecl,\vecl\cdot\vecb}| = \sum_{\vecb'\in \Z_2^k}
  X_{\vecb'} = \sum_{\vecb'\in \Z_2^k\setminus\{\vecb\}}  X_{\vecb'}
  + 1$.  

For $\vecb'\neq \vecb$, the expected value of $X_{\vecb'}$ is $\Exp[X_{\vecb'}] 
    = \prob_\vecl (X_{\vecb'}=1)
    = \prob_\vecl (\vecl\cdot \vecb' = \vecl\cdot\vecb)
    =\frac{1}{N}.$
The last equality can be seen by choosing $i$ such that $b_i'=1$ and
$b_i=0$ without loss generality ($\vecb$ and $\vecb'$ can be swapped
if needed).  Then by fixing $l_j$ for $j\neq i$, and choosing
$l_i$ uniformly, $\vecb\cdot \vecl$ is fixed while $\vecb'\cdot \vecl$
is uniformly distributed. The variance of
$X_{\vec{b}'}$ is $\var(X_{\vec{b}'})=\frac{1}{N}-\frac{1}{N^2}.$

Therefore, the expected value of $|T_{\vecl,\vecl\cdot\vecb}|-1$ is
$\Exp[\sum_{\vecb'\in \mathbb{Z}_2^k\setminus\{\vec{b}\}}X_{\vecb'}]
=\sum_{\vecb'\in \mathbb{Z}_2^k\setminus\{\vec{b}\}
}\Exp[X_{\vec{b}'}]=\frac{2^k-1}{N}$,
and the variance of $|T_{\vecl,\vecl\cdot\vecb}|-1$ is
\begin{eqnarray}\label{eq:var}
  \var(\sum_{\vecb'\in\mathbb{Z}_2^k\setminus\{\vecb\}} X_{\vecb'})
  &=&\sum_{\vecb'\in\mathbb{Z}_2^k\setminus\{\vecb\}} \var(X_{\vec{b}'})
      +  \sum_{\vecb'\neq\vecb'', \vecb',\vecb''\in \mathbb{Z}_2^k\setminus \{\vecb\}} 
      \cov(X_{\vec{b}'},  X_{\vec{b}''})\nonumber\\ 
  &\leq& \frac{2^k-1}{N}+ \sum_{\vecb''\neq\vecb',\vecb',\vecb''\in
      \mathbb{Z}_2^k\setminus \{\vecb\}}\cov(X_{\vecb'},
      X_{\vecb''}).
\end{eqnarray}
This results in $\var(\sum_{\vecb'\in
  \mathbb{Z}_2^k\setminus\{\vec{b}\}}X_{\vecb'}) \leq \frac{2^k-1}{N}$
provided that the covariences are all zero, which we show below.
First we finish proving the claim by applying Chebyshev's inequality
to get
\[
	\prob(|T_{\vecl,\vecb\cdot\vecl}|\geq poly(k))\leq \frac{2^k-1}{N}
        \frac{1}{poly(k)}=\frac{1}{poly(k)},
\]
when $k\leq \log N + c \log \log N$.

In the following, we show that $X_{\vecb'}$ and $X_{\vecb''}$ are
independent when $\vecb$, $\vecb'$, and $\vecb''$ are all different
values, which implies $\cov(X_{\vec{b}'}, X_{\vec{b}''})=0$. To see
this let $1$ be a coordinate such that $b'_1=1$ and $b''_1=0$ without
loss of generality ($b'_j$ and $b''_j$ can be swapped). If $b_1=0$,
then
  \begin{eqnarray}
  	&&\prob_{\vecl}(X_{\vecb'}=1, X_{\vecb''}=1) \nonumber\\
	&=&\sum_{l_2,\dots,l_k = 0: }^{N-1} \prob_{l_1}(X_{\vecb'}=1,
            X_{\vecb''}=1| l_2,\dots, l_k)\cdot \prob(l_2,\dots, l_k) \nonumber\\ 
	&=&\frac{1}{N^{k-1}} \sum_{l_2,\dots,l_k = 0}^{N-1}
            \prob_{l_1}(X_{\vecb'}=1|l_2,\dots, l_k )\cdot
            \prob_{l_1}(X_{\vecb''}=1|l_2,\dots,
            l_k) \label{eq:indep_1_1}\\  
	&=&\frac{1}{N}\cdot \frac{1}{N^{k-1}}
            \sum_{l_2,\dots,l_k = 0}^{N-1}\prob_{l_1}(X_{\vecb''}=1|l_2,\dots, l_k)
            \nonumber\\ 
	&=&\frac{1}{N}\cdot \frac{1}{N^{k-1}} \cdot N^{k-2} = \frac{1}{N^2}. \label{eq:indep_1_2}
  \end{eqnarray}
  Equation~\ref{eq:indep_1_1} is true because $X_{b''}$ is fixed after
  fixing $l_2,\dots,l_k$.  For Equation~\ref{eq:indep_1_2} note that
  $\vecb$ and $\vecb''$ differ in at least one bit besides position
  $i=1$.  Therefore a $1/N$ fraction of the $N^{k-1}$ choices for
  $l_2,\dots,l_k$ satisfy $\vecl\cdot \vecb = \vecl\cdot \vecb''$.
  
  In the case where $b_1=1$ the properties of $X_{\vecb'}$ and
  $X_{\vecb''}$ are reversed:
  \begin{eqnarray}
  	&&\prob_{\vecl}(X_{\vecb'}=1, X_{\vecb''}=1) \nonumber\\
	&=&\sum_{l_2,\dots,l_k = 0}^{N-1}\prob_{l_1}(X_{\vecb'}=1,
            X_{\vecb''}=1| l_2,\dots, l_k )\cdot
            \prob(l_2,\dots, l_k) \nonumber\\ 
	&=&\frac{1}{N^{k-1}} \sum_{l_2,\dots,l_k = 0}^{N-1} 
            \prob_{l_1}(X_{\vecb'}=1|l_2,\dots, l_k )\cdot
            \prob_{l_1}(X_{\vecb''}=1|l_2,\dots,
            l_k)     \label{eq:indep_2_1}\\ 
	&=&\frac{1}{N} \frac{1}{N^{k-1}}
            \sum_{l_2,\dots,l_k = 0}^{N-1} \prob_{l_1} (X_{\vecb'}=1|l_2,\dots, l_k)  \nonumber\\ 
        &=&\frac{1}{N}\cdot \frac{1}{N^{k-1}} \cdot N^{k-2} =
            \frac{1}{N^2}. \label{eq:indep_2_2}
  \end{eqnarray}
  Equation~\ref{eq:indep_2_1} is true because $X_{\vecb'}$ is fixed to
  0 or 1 for all $l_1$.  Equation~\ref{eq:indep_2_2} is true because
  $\vecb$ and $\vecb'$ differ in at least one bit besides $i=1$.
  
Therefore, the covariance of $X_{\vecb'}$ and $X_{\vecb''}$ is $0$.  

\end{proof}

\end{document}